\definecolor{myblue}{RGB}{0,50,200}
\theoremstyle{definition}
\newtheorem{theorem}{\textit{Theorem}}
\theoremstyle{definition}
\newcommand{\mca}{\mathcal}
\newcommand{\mbb}{\mathbb}
\newcommand{\mrm}{\mathrm}
\begin{document}
\title{Diffusion-dynamics laws in stochastic reaction networks}
\author{Tan Van Vu}
\email{tan@biom.t.u-tokyo.ac.jp}
\affiliation{Department of Information and Communication Engineering, Graduate School of Information Science and Technology, The University of Tokyo, Tokyo 113-8656, Japan}

\author{Yoshihiko Hasegawa}
\email{hasegawa@biom.t.u-tokyo.ac.jp}
\affiliation{Department of Information and Communication Engineering, Graduate School of Information Science and Technology, The University of Tokyo, Tokyo 113-8656, Japan}

\date{\today}

\begin{abstract}
Many biological activities are induced by cellular chemical reactions of diffusing reactants. 
The dynamics of such systems can be captured by stochastic reaction networks. 
A recent numerical study has shown that diffusion can significantly enhance the fluctuations in gene regulatory networks. 
However, the universal relation between diffusion and stochastic system dynamics remains veiled. 
Within the approximation of reaction-diffusion master equation (RDME), we find general relation that the steady-state distribution in complex balanced networks is diffusion-independent.
Here, complex balance is the nonequilibrium generalization of detailed balance. 
We also find that for a diffusion-included network with a Poisson-like steady-state distribution, the diffusion can be ignored at steady state. 
We then derive a necessary and sufficient condition for networks holding such steady-state distributions. 
Moreover, we show that for linear reaction networks the RDME reduces to the chemical master equation, which implies that the stochastic dynamics of networks is unaffected by diffusion at any arbitrary time.
Our findings shed light on the fundamental question of when diffusion can be neglected, or (if nonnegligible) its effects on the stochastic dynamics of the reaction network.
\end{abstract}

\pacs{}
\maketitle

\section{Introduction}
Diverse biological phenomena, such as cellular signal transductions and gene expression systems, are commonly studied by stochastic reaction network modeling \cite{Blake.2003,Lan.2007,Tsimring.2014}. 
These systems involve a set of reactant species which react through several channels. 
In most of the existing studies, such systems are often assumed to be well mixed, meaning that the diffusion coefficients of the reactants are infinitely large \cite{Johan.2003,Carlos.2007,Tostevin.2010,Kyung.2013,Pilkiewicz.2016,Samanta.2017,Ouldridge.2017}. 
However, experiments have shown that reactants in cells diffuse at considerably low rates \cite{Ellis.2001}, and that the smallest timescale of the system is a little larger than the timescale of molecular diffusion. 
In such cases, the well-mixed assumption cannot accurately obtain the stochastic dynamics of the system. 
For example, living cells continuously receive signals at their receptors, which are subsequently transmitted to the nucleus through biochemical reaction networks \cite{Kholodenko.2006,Cheong.2011,Becker.2015,Hasegawa.2018}. 
This process is strongly influenced by extrinsic and intrinsic noise arising from fluctuations in the input and reactions. 
These effects induce unavoidable fluctuations in the biomolecule concentrations, which deteriorate the fidelity of information transfer \cite{Elowitz.2002,Ozbudak.2002}. 
By accurately evaluating the fluctuations, we would better understand the mechanism underlying signal transmission in cells. 
In a numerical study of gene regulatory networks, Ref. \cite{Jeroen.2006} showed that the fluctuations are larger in the model with diffusion than in its well-mixed counterpart.
Thus, how diffusion relates to the stochastic dynamics of reaction networks is a pertinent question.
Recently, Ref. \cite{Smith.2018} has numerically studied the effects of diffusion on single-cell variability in multicellular organisms, and the limits of slow and fast diffusion have been investigated.

Two commonly used models for studying stochastic reaction-diffusion systems are the reaction-diffusion master equation (RDME) \cite{Gardiner.2009} and the Smoluchowski model \cite{Von.1917}.
The RDME, which is a mesoscopic model, is an extension of the nonspatial chemical master equation (CME) \cite{Gardiner.2009} and can be interpreted as an asymptotic approximation to spatially continuous stochastic reaction-diffusion models \cite{Samuel.2009.PRE}.
The RDME has been successfully applied in studying many biological systems \cite{Howard.2003,Fange.2006,Lawson.2013,Smith.2018}.
It is worth noting that the Langevin equation, which can be derived from an equivalent Fokker--Planck equation or the Poisson representation, can handle continuum-limit diffusion in reaction networks \cite{Gardiner.2009,Benitez.2016}.
However, the Langevin equation is applicable to biochemical reactions occurring in infinite space with no physical boundary, which is unrealistic in biological cells.

In the present work, we investigate the relations between diffusion and the stochastic dynamics of reaction networks within a physical reflecting boundary.
In this system, reactants diffuse within a closed 3-dimensional space without escaping. 
With the aid of the RDME, we find an intriguing law stating that diffusion does not affect the steady-state distribution of complex balanced networks, which have a Poisson-like distribution. 
Our proof reveals that if the network presents a steady-state distribution of product-of-Poissonians form, diffusion can be neglected. 
We then calculate the necessary and sufficient conditions for such steady-state network distributions. 
We also find another result, wherein steady state is not a requirement.
Specifically, we prove that for linear reaction networks, one can derive the CME from the RDME, which indicates that diffusion can be ignored in this case.
This result can be restated as follows: the stochastic dynamics of linear networks are diffusion-independent, which is consistent with the Smoluchowski model.
In addition, we perform stochastic simulations on both linear and nonlinear networks to verify our results.

\section{Models}
We consider a general reaction network consisting of $N$ reactant species $X_1,\dots,X_N$ and $K$ reactions $R_1,\dots,R_K$. Assume that all reactions occur inside a cell with fixed volume $\Omega$, and that reaction $R_j~(1\leq j\leq K)$ is of the form
\begin{equation}\label{eq:react.form}
s_{1j}X_1+\dots+s_{Nj}X_N\xrightarrow{k_j}r_{1j}X_1+\dots+r_{Nj}X_N,
\end{equation}
where $s_{ij}, r_{ij}\in\mbb{N}_{\geq 0}$ are the stoichiometric coefficients and $k_j\in\mbb{R}_{>0}$ is the macroscopic reaction rate. Here, $\mbb{N}_{\geq 0}$ denotes the set of nonnegative integers. $\mbb{R}_{>0}$ and $\mbb{R}_{\geq 0}$ are defined analogously. If $\sum_{i=1}^{N}s_{ij}\leq 1$ for all $j=1,\dots,K$, then the reaction network is linear; otherwise, it is nonlinear. The state of the system is fully determined by the molecule-number vector of all reactant species in the system, $\bm{n}=[n_1,\dots, n_N]^{\top}$, where $n_i\in\mbb{N}_{\geq 0}$ is the number of molecules of species $X_i$. Assuming mass-action kinetics, the time evolution of a well-mixed system can be described by the following chemical master equation (CME):
\begin{equation}\label{eq:cme}
\partial_t P(\bm{n},t)=\sum_{j=1}^{K}(\bm{\mbb{E}}^{-\bm{V}_j}-1)f_{j}(\bm{n},\Omega)P(\bm{n},t),
\end{equation}
where $\bm{V}=[r_{ij}-s_{ij}]\in\mbb{Z}^{N\times K}$ is a stoichiometric matrix, $\bm{V}_{j}$ denotes the $j^{\mathrm{th}}$ column of matrix $\bm{V}$, and $\bm{\mbb{E}}^{\bm{x}}$ is an operator that replaces $\bm{n}$ with $\bm{n}+\bm{x}$. $P(\bm{n},t)$ is the probability of the system being in state $\bm{n}$ at time $t$, and the propensity function $f_j(\bm{n},\Omega)$ of reaction $R_j$ is given by
\begin{equation}\label{eq:propensity.func}
f_j(\bm{n},\Omega)=k_j\Omega^{1-\sum_{i=1}^{N}s_{ij}}\prod_{i=1}^{N}\frac{n_i!}{(n_i-s_{ij})!}.
\end{equation}
To include diffusion in stochastic spatial dynamics, many researchers apply the RDME, in which space is partitioned discretely into many voxels. 
It is known that the RDME is accurate if an appropriate combination of the time- and length-scale is chosen \cite{Elf.2004,Dobrzynski.2007,Samuel.2009.SIAM,Samuel.2009.PRE,Radek.2009}.
We assume from now on that the volume of the system is optimally divided into small voxels and as such, the RDME yields a good description of the time evolution of the probability distribution.
Diffusion then occurs among the voxels, and the reaction can occur within the same voxel considered to be a well-mixed system. 
Assume that the volume $\Omega$ is divided into a set $\mca{V}$ of voxels labeled by integers $v=1,2,\dots,|\mca{V}|$. 
Each voxel $v$ occupies a constant volume $\omega$ and contains $n_{vi}$ molecules of reactant species $X_i$. The state vector of voxel $v$ is denoted as $\bm{n}_{v}=[n_{v1},\dots,n_{vN}]^{\top}$. 
The state of the whole system is then described as the molecule-number vector $\bm{n}$ of each species in each voxel, namely, $\bm{n}=[\bm{n}_{1}^{\top},\dots,\bm{n}_{|\mca{V}|}^{\top}]^{\top}$. 
We also define a vector $\bm{1}_{vi}\in\mbb{Z}^{|\mca{V}|N}$, in which the number of molecules of all species in all voxels is zero except for species $X_i$ in voxel $v$ (which is one), and a vector $\widetilde{\bm{V}}_{vj}\in\mbb{Z}^{|\mca{V}|N}$, in which all elements are zero except in voxel $v$ (which holds $\bm{V}_j$). 
As the diffusion of each species into neighboring voxels can be modeled as a first-order reaction, the diffusion-included reaction network can be described in the following form:
\begin{equation}
\begin{aligned}
&s_{1j}X_1^v+\dots+s_{Nj}X_N^v\xrightarrow{k_j}r_{1j}X_1^v+\dots+r_{Nj}X_N^v,\\
&X_{i}^{v}\xrightarrow{d_i}X_{i}^{v'},~\forall~1\leq i\leq N,v\in\mca{V},v'\in N_e(v),
\end{aligned}
\end{equation}
where $X_{i}^{v}$ refers to species $X_i$ in voxel $v$, $d_i$ is the diffusion rate of species $X_i$, and $N_e(v)$ is the set of voxels neighboring $v$. The stochastic dynamics of the system can then be described by the following RDME:
\begin{equation}\label{eq:rdme}
\begin{aligned}
\partial_tP(\bm{n},t)&=\sum_{v\in \mca{V}}\sum_{v'\in  N_e(v)}\sum_{i=1}^{N}\left(\bm{\mbb{E}}^{\bm{1}_{vi}-\bm{1}_{v'i}}-1\right)d_{i}n_{vi}P(\bm{n},t)\\
&+\sum_{v\in \mca{V}}\sum_{j=1}^{K}\left(\bm{\mbb{E}}^{-\widetilde{\bm{V}}_{vj}}-1\right)f_{vj}(\bm{n},\omega)P(\bm{n},t),
\end{aligned}
\end{equation}
where $f_{vj}(\bm{n},\omega)$ is the propensity function, given by
\begin{equation}\label{eq:spatial.propensity.func}
f_{vj}(\bm{n},\omega)=k_j\omega^{1-\sum_{i=1}^{N}s_{ij}}\prod_{i=1}^{N}\frac{n_{vi}!}{(n_{vi}-s_{ij})!}.
\end{equation}
In the large-diffusion limit, the RDME converges to the CME \cite{Stephen.2016}.

Before stating our results, we describe several existing concepts and results of deterministic reaction networks. For each reaction $R_j~(1\leq j\leq K)$, the linear combinations $\sum_{i=1}^{N}s_{ij}X_i$ and $\sum_{i=1}^{N}r_{ij}X_i$ of the species in Eq.~\eqref{eq:react.form} are called the complexes of the reaction. Defining $\mca{C}=\{C_1,C_2,\dots,C_M\}$ as the set of complexes, with $M=|\mca{C}|$, each reaction can be expressed as $C_i\xrightarrow{a_{ii'}}C_{i'}$, where $a_{ii'}$ denotes the reaction rate. For each $1\leq i,i'\leq M$, $a_{ii'}=0$ if $C_i\to C_{i'}$ is not present in the reaction network; otherwise, $a_{ii'}=k_j$ for some $j~(1\leq j\leq K)$. The matrix $\bm{A}\in\mbb{R}^{M\times M}$, called the \emph{Kirchhoff matrix} of the reaction network, is defined as follows:
\begin{equation}
[\bm{A}]_{ii'}=\begin{cases}
-\sum_{j=1}^{M}a_{ij}, & \mathrm{if}~i=i'\\
a_{i'i}, & \mathrm{if}~i\neq i'
\end{cases}.
\end{equation}
Let $\mca{X}=\{X_1,\dots,X_N\}$ be the set of species and $\mca{R}=\bigcup_{i,i':a_{ii'}>0}\{C_i\to C_{i'}\}$ be the set of reactions in the network. 
The triple $\{\mca{X},\mca{C},\mca{R}\}$ then defines a reaction network. 
A reaction network $\{\mca{X},\mca{C},\mca{R}\}$ is called \emph{weakly reversible} if for any reaction $C_{i}\to C_{i'}\in\mca{R}$, there exists a sequence of complexes $C_{i_1},\dots,C_{i_p}\in\mca{C}$ such that $C_{i'}\to C_{i_1},C_{i_1}\to C_{i_2},\dots,C_{i_p}\to C_{i}\in\mca{R}$.
One can construct a directed graph $\mca{G}$ corresponding to a reaction network in the following manner.
For each $1\leq i,i'\leq M$, draw a directed edge from $C_i$ to $C_{i'}$ if and only if $C_{i}\to C_{i'}\in\mca{R}$. 
We denote by $\ell$ the number of connected components of the underlying undirected graph of $\mca{G}$. 
The \emph{deficiency} of a reaction network is an integer defined as $\delta=|\mca{C}|-\ell-\mathrm{rank}(\bm{V})$.
According to Ref.~\cite{Feinberg.1995}, $\delta$ is always nonnegative.

In a deterministic system, the vector of species concentrations, $\bm{c}=[c_1,c_2,\dots,c_N]^{\top}\in\mbb{R}_{\geq 0}^{N}$, temporally evolves as described by the following differential equations, which express the different form of rate equations:
\begin{equation}
\partial_t\bm{c}=\bm{Y}\cdot\bm{A}\cdot\Psi(\bm{c}).
\end{equation}
Here, $\bm{Y}=[y_{ij}]\in\mbb{N}_{\geq 0}^{N\times M}$ is the matrix of stoichiometric compositions of the complexes, i.e., $y_{ij}$ is the stoichiometric coefficient of $C_{j}$ corresponding to species $X_{i}$, and $\Psi:\mbb{R}^{N}\mapsto\mbb{R}^{M}$ is a mapping given by
\begin{equation}
\Psi_{j}(\bm{c})=\prod_{i=1}^{N}c_i^{y_{ij}},~j=1,\dots,M.
\end{equation}
A reaction network is called \emph{complex balanced} at $\bm{c}\in\mbb{R}_{>0}^{N}$ if $\bm{A}\cdot\Psi(\bm{c})=0$. This condition means that for each complex $C_i\in\mca{C}$, $\sum_{C_i\to C_{i'}}a_{ii'}\Psi_{i}(\bm{c})=\sum_{C_{i'}\to C_i}a_{i'i}\Psi_{i'}(\bm{c})$. In this case, $\bm{c}$ is a positive equilibrium of the network.

\section{Results}
The following states our first result.
\begin{theorem}\label{theo:diffusion.independent}
If a reaction network is complex balanced, its steady-state distribution is unaffected by diffusion.
\end{theorem}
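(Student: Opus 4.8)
The plan is to exhibit the steady-state distribution of the RDME \eqref{eq:rdme} explicitly and to recognize it as the diffusion-independent, product-of-Poissonians distribution that the Anderson--Craciun--Kurtz theorem attaches to the nonspatial CME \eqref{eq:cme}. Let $\bm{c}\in\mbb{R}_{>0}^{N}$ be a point at which the network is complex balanced, $\bm{A}\cdot\Psi(\bm{c})=0$. Since complex balance is a property of the rate constants $k_j$ and the reaction graph alone, $\bm{c}$ remains a complex-balanced equilibrium when the system volume is taken to be a single voxel volume $\omega$ instead of $\Omega$; the known result for the CME then says that $\pi_v(\bm{n}_v)=\prod_{i=1}^{N}e^{-\omega c_i}(\omega c_i)^{n_{vi}}/n_{vi}!$ is a stationary distribution of the within-voxel reaction dynamics. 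I would therefore take as ansatz the factorized product over voxels, $\pi(\bm{n})=\prod_{v\in\mca{V}}\pi_v(\bm{n}_v)$, and verify directly that it annihilates the right-hand side of \eqref{eq:rdme}.

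The verification splits into the diffusion sum and the reaction sum, which I would show vanish separately. For the reaction sum, fix a voxel $v$: because the propensity $f_{vj}(\bm{n},\omega)$ in \eqref{eq:spatial.propensity.func} depends only on $\bm{n}_v$ and the shift $\widetilde{\bm{V}}_{vj}$ acts only within voxel $v$, while $\pi$ factorizes over voxels, the voxel-$v$ contribution equals $\bigl(\prod_{w\neq v}\pi_w(\bm{n}_w)\bigr)\sum_{j}(\bm{\mbb{E}}^{-\bm{V}_j}-1)f_j(\bm{n}_v,\omega)\pi_v(\bm{n}_v)$, whose inner sum is exactly the CME generator of the original network at volume $\omega$ applied to its own stationary distribution, hence zero. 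For the diffusion sum, a one-line computation using the Poisson ratio $\pi(\bm{n}+\bm{1}_{vi}-\bm{1}_{v'i})/\pi(\bm{n})=n_{v'i}/(n_{vi}+1)$ gives $(\bm{\mbb{E}}^{\bm{1}_{vi}-\bm{1}_{v'i}}-1)\,d_i n_{vi}\,\pi(\bm{n})=d_i(n_{v'i}-n_{vi})\,\pi(\bm{n})$; summing over ordered neighbor pairs and using that the neighbor relation is symmetric with equal hopping rates in both directions (which is precisely what the reflecting-boundary discretization enforces), the terms for $(v,v')$ and $(v',v)$ cancel pairwise, so the whole diffusion sum is zero. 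Thus $\pi$ is a stationary distribution of the RDME, and it contains no dependence on the diffusion rates $d_i$ whatsoever.

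To conclude, I would invoke irreducibility of the RDME Markov chain on each compatibility class of total molecule numbers $n_i=\sum_{v}n_{vi}$ — diffusion connects the voxels and weak reversibility, which is implied by complex balance, supplies the within-voxel transitions — so that $\pi$ is the unique steady-state distribution there. Marginalizing $\pi$ onto the totals, each $n_i$ is a sum of $|\mca{V}|$ independent Poisson variables with parameter $\omega c_i$, hence Poisson with parameter $|\mca{V}|\,\omega c_i=\Omega c_i$, and these are mutually independent across $i$; this is exactly the Anderson--Craciun--Kurtz steady state of the diffusion-free network on volume $\Omega$. Hence the steady-state distribution of molecule numbers is identical with and without diffusion, and moreover independent of the diffusion rates and of the voxel partition. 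As a byproduct the proof also isolates the structural fact (to be reused for the later results) that any network whose diffusion-free steady state is product-of-Poissonians admits the voxel-factorized product-Poisson steady state in the RDME.

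The main obstacle I anticipate is not any single calculation but assembling the correct ansatz and executing the two cancellations cleanly: the reaction-sum step needs the per-voxel Anderson--Craciun--Kurtz stationarity together with the product factorization, and the diffusion-sum step needs both the exact Poisson ratio identity and the reversibility (symmetry) of the discretized diffusion under reflecting boundaries. A secondary point requiring care is the role of conservation laws: the asserted equality of steady states must be read within corresponding stoichiometric compatibility classes of the total molecule numbers, and one should check that the normalization constants agree after the voxel-marginalization.
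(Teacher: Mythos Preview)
Your proposal is correct and follows essentially the same route as the paper: the same product-of-Poissonians ansatz, the same separate annihilation of the diffusion and reaction sums (the former via the Poisson ratio and neighbor symmetry, the latter via the per-voxel Anderson--Craciun--Kurtz stationarity combined with factorization), and the same marginalization to totals yielding Poisson$(\Omega c_i)$. Your added remarks on irreducibility/uniqueness and on restricting to stoichiometric compatibility classes correspond to the paper's use of the state space $\Gamma$ and its normalizing constant $\mca{N}_\Gamma$.
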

\begin{proof}
As the network is complex balanced, there exists a positive equilibrium $\bm{c}=[c_1,\dots,c_N]^{\top}\in\mbb{R}_{>0}^{N}$ such that $\bm{A}\cdot\Psi(\bm{c})=0$. 
We note that the only requirement in our proof is the existence of some $\bm{c}$ such that $\bm{A}\cdot\Psi(\bm{c})=0$. 
Here, $\bm{c}$ is not the steady-state concentration in the presence of diffusion.
Let $\Gamma\subseteq\mbb{N}_{\geq 0}^{N}$ be the state space of the network, which may depend on the initialization. 
First, we prove the following ansatz: that the steady-state distribution of the RDME is given by a product $P_{\Gamma}(\bm{n},t)$ of Poisson distributions:
\begin{equation}\label{eq:solution.prob.form}
P_{\Gamma}(\bm{n},t)=\begin{cases}
\mca{N}_{\Gamma}\prod_{v\in\mca{V}}\prod_{i=1}^{N}\frac{(\omega c_i)^{n_{vi}}}{n_{vi}!}, & \mathrm{if}~\sum_{v\in\mca{V}}\bm{n}_{v}\in\Gamma\\
0, & \mathrm{if}~\sum_{v\in\mca{V}}\bm{n}_{v}\notin\Gamma
\end{cases},
\end{equation}
where $\mca{N}_{\Gamma}$ is the normalizing constant. For each $\bm{n}_v\in\mbb{N}_{\geq 0}^{N}$, we define $P_{\Gamma}^*(\bm{n}_v,t)=\prod_{i=1}^{N}(\omega c_i)^{n_{vi}}/n_{vi}!$. $P_{\Gamma}(\bm{n},t)$ can then be expressed as $P_{\Gamma}(\bm{n},t)=\mca{N}_{\Gamma}\prod_{v\in\mca{V}}P_{\Gamma}^*(\bm{n}_v,t)$. Now, we need to show that $\partial_tP_{\Gamma}(\bm{n},t)=0$. Substituting $P_{\Gamma}(\bm{n},t)$ in Eq.~\eqref{eq:solution.prob.form} into Eq.~\eqref{eq:rdme}, the first term of the right-hand side becomes
\begin{equation}
\sum_{v\in \mca{V}}\sum_{v'\in  N_e(v)}\sum_{i=1}^{N}\left(\bm{\mrm{E}}^{\bm{1}_{vi}-\bm{1}_{v'i}}-1\right)d_{i}n_{vi}P_{\Gamma}(\bm{n},t)=0.\label{eq:diffusion.term}
\end{equation}
The second term on the right-hand side becomes the sum of the following values over all voxels $v\in\mca{V}$:
\begin{align*}
&\sum_{j=1}^{K}\left(\bm{\mbb{E}}^{-\widetilde{\bm{V}}_{vj}}-1\right)f_{vj}(\bm{n},\omega)P_{\Gamma}(\bm{n},t)\\
&=\mca{N}_{\Gamma}\prod_{v'\neq v}P_{\Gamma}^*(\bm{n}_{v'},t)\sum_{j=1}^{K}\left(\bm{\mbb{E}}^{-\bm{V}_{j}}-1\right)f_{j}(\bm{n}_v,\omega)P_{\Gamma}^*(\bm{n}_v,t).
\end{align*}
Exploiting the condition $\bm{A}\cdot\Psi(\bm{c})=0$, one can prove that \cite{Anderson.2010}
\begin{equation}
\sum_{j=1}^{K}\left(\bm{\mbb{E}}^{-\bm{V}_{j}}-1\right)f_{j}(\bm{n}_v,\omega)P_{\Gamma}^*(\bm{n}_v,t)=0.\label{eq:reaction.term}
\end{equation}
Therefore, the second term also disappears and we obtain the desired result $\partial_tP_{\Gamma}(\bm{n},t)=0$. Let $\widehat{\bm{n}}=\sum_{v\in \mca{V}}\bm{n}_{v}$ represent the number of molecules of all species, i.e., $\widehat{n}_i$ is the total number of molecules of species $X_i$ in the system. To complete our theorem, we compute the steady-state distribution $P_{\Gamma}(\widehat{\bm{n}})$, and show its diffusion-independence. For $\widehat{\bm{n}}\notin\Gamma$, obviously $P_{\Gamma}(\widehat{\bm{n}})=0$. For $\widehat{\bm{n}}\in\Gamma$, the explicit form of $P_{\Gamma}(\widehat{\bm{n}})$ is obtained as follows:
\begin{equation}
P_{\Gamma}(\widehat{\bm{n}})=\sum_{\bm{n}:\sum_{v}\bm{n}_v=\widehat{\bm{n}}}P_{\Gamma}(\bm{n},t)=\mca{N}_{\Gamma}\prod_{i=1}^{N}\frac{(\Omega c_i)^{\widehat{n}_i}}{\widehat{n}_i!}.\label{eq:explicit.distribution}
\end{equation}
As $\mca{N}_{\Gamma}=\left(\sum_{\widehat{\bm{n}}\in\Gamma}\prod_{i=1}^{N}\frac{(\Omega c_i)^{\widehat{n}_i}}{\widehat{n}_i!}\right)^{-1}$ does not depend on diffusion, the distribution $P_{\Gamma}(\widehat{\bm{n}})$ is also independent of diffusion.
The details of these derivations can be seen in Appendix \ref{app.theorem.1}.
\end{proof}
\begin{figure}[t]
	\centering
	\includegraphics[width=0.47\textwidth]{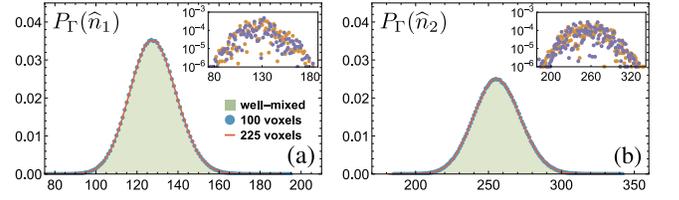}
	\caption{Steady-state distributions (a) $P_{\Gamma}(\widehat{n}_1)$ of species $X_1$ and (b) $P_{\Gamma}(\widehat{n}_2)$ of species $X_2$ of nonlinear reaction network. Each panel shows the distributions of the 1-voxel system (green region), 100-voxel system (blue dots), and 225-voxel system (red line). The parameters are $k_1=4,k_2=1,k_3=2,\Omega=128$. The diffusion rates of species $X_1,X_2$ are $d_{1}=1,d_{2}=2$ (100 voxels), and $d_{1}=2,d_{2}=1$ (225 voxels). Insets show the absolute probability differences $|P_{100}-P_1|$ (orange dots) and $|P_{225}-P_1|$ (violet dots), where $P_1, P_{100}$ and $P_{225}$ denote the probabilities in the 1-, 100-, and 225-voxel systems, respectively.}\label{fig:nonlinear.prob.dis}
\end{figure}
Our theoretical result is empirically verified in simulations of the following complex balanced network:
\begin{equation}
\varnothing\xrightarrow{k_1}X_1+2X_2\xrightarrow{k_2}X_2\xrightarrow{k_3}\varnothing.
\end{equation}
We consider three cases with different numbers of voxels in the system volume: 1 voxel (a well-mixed system), 100 voxels, and 225 voxels. The diffusion coefficients of the species in the 100-voxel system differ from those in the 225-voxel system. The steady-state distributions of species $X_1$ and $X_2$ are plotted in Fig.~\ref{fig:nonlinear.prob.dis}. As can be seen, the distributions of both species are consistent in all three cases.
From these result, it is pertinent to ask which conditions define a complex balanced network. Reference \cite{Feinberg.1995} proved that a weakly reversible reaction network with zero deficiency is a complex balanced network. This implies that in some cases, a complex balanced network can be identified by its network topology. 
In Ref.~\cite{Szederkenyi.2011}, complex balanced realizations of a given kinetic polynomial system were computed by a linear programming algorithm.

Thus far, we show that the steady-state distribution of a complex balanced network is a product of Poisson distributions. A network with such a distribution implies that the system is diffusion-independent at steady state. Therefore, we desire to know the condition under which the system establishes a Poisson-like steady-state distribution. This condition is embodied in the following theorem.
\begin{theorem}\label{theo:necessary.sufficient.condition}
The network possesses the steady-state distribution defined in Eq.~\eqref{eq:solution.prob.form} in all state spaces $\Gamma\subseteq\mbb{N}_{\geq 0}^{N}$ if and only if it is complex balanced.
\end{theorem}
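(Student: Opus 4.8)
The plan is to prove the two implications separately, the substantive one being that a Poisson-like steady state on \emph{every} $\Gamma$ forces complex balance. The ``if'' direction is already contained in the proof of Theorem~\ref{theo:diffusion.independent}: there we verified that whenever $\bm{A}\cdot\Psi(\bm{c})=0$ for some $\bm{c}\in\mbb{R}_{>0}^{N}$, the product of Poissonians $P_{\Gamma}$ in Eq.~\eqref{eq:solution.prob.form} satisfies $\partial_{t}P_{\Gamma}=0$ under the RDME \eqref{eq:rdme} for an arbitrary state space $\Gamma$, hence is a steady-state distribution on each $\Gamma$ (and, by the argument that yields uniqueness in the well-mixed case, the only one on each irreducible component). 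So it remains to prove the converse.

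For the converse it suffices to treat the single-voxel case: taking $|\mca{V}|=1$ and $\omega=\Omega$ collapses the RDME \eqref{eq:rdme} to the CME \eqref{eq:cme}, and if $P_{\Gamma}$ of Eq.~\eqref{eq:solution.prob.form} is stationary for every state space of the spatial model it is in particular stationary for every $\Gamma\subseteq\mbb{N}_{\geq 0}^{N}$ in the CME. Write the stationarity condition at a state $\bm{n}$ as a flux balance, $0=\sum_{j=1}^{K}\left[f_{j}(\bm{n}-\bm{V}_{j})P_{\Gamma}(\bm{n}-\bm{V}_{j})-f_{j}(\bm{n})P_{\Gamma}(\bm{n})\right]$, insert the ansatz, and simplify with the mass-action form \eqref{eq:propensity.func}. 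Using the convention $1/\bm{m}!:=0$ for $\bm{m}\notin\mbb{N}_{\geq 0}^{N}$, a short computation gives, for $\bm{n}\in\Gamma$,
\[
\begin{aligned}
f_{j}(\bm{n})P_{\Gamma}(\bm{n})&=k_{j}\Psi_{\sigma(j)}(\bm{c})\,\mca{N}_{\Gamma}\,\frac{\bm{c}^{\,\bm{n}-y_{\sigma(j)}}}{(\bm{n}-y_{\sigma(j)})!},\\
f_{j}(\bm{n}-\bm{V}_{j})P_{\Gamma}(\bm{n}-\bm{V}_{j})&=k_{j}\Psi_{\sigma(j)}(\bm{c})\,\mca{N}_{\Gamma}\,\frac{\bm{c}^{\,\bm{n}-y_{\tau(j)}}}{(\bm{n}-y_{\tau(j)})!},
\end{aligned}
\]
where $\sigma(j)$ and $\tau(j)$ index the source and target complexes of $R_{j}$, $y_{m}$ is the $m$-th column of $\bm{Y}$, and $\Psi$ is the monomial map from the Models section. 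Collecting the resulting terms reaction by reaction according to which complex appears as source or target, and using that distinct complexes have distinct composition vectors, the stationarity condition becomes
\[
\sum_{m=1}^{M}\bigl[\bm{A}\cdot\Psi(\bm{c})\bigr]_{m}\,\frac{\bm{c}^{\,\bm{n}-y_{m}}}{(\bm{n}-y_{m})!}=0\qquad\text{for every }\bm{n}\in\Gamma .
\]

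Now let $\Gamma$ range over the stoichiometric compatibility classes $(\bm{n}_{0}+\mathrm{span}\,\bm{V})\cap\mbb{N}_{\geq 0}^{N}$, each of which is a legitimate state space and which together cover $\mbb{N}_{\geq 0}^{N}$; the last identity then holds for all $\bm{n}\in\mbb{N}_{\geq 0}^{N}$. Restricting to $\bm{n}$ whose every coordinate exceeds $\max_{i,m}y_{im}$, every factorial is finite and every term nonzero, so dividing by $\mca{N}_{\Gamma}\,\bm{c}^{\,\bm{n}}/\bm{n}!$ yields
\[
\sum_{m=1}^{M}\beta_{m}\prod_{i=1}^{N}\frac{n_{i}!}{(n_{i}-y_{im})!}=0,\qquad\beta_{m}:=\frac{[\bm{A}\cdot\Psi(\bm{c})]_{m}}{\Psi_{m}(\bm{c})},
\]
on a Zariski-dense subset of $\mbb{N}_{\geq 0}^{N}$. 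The maps $\bm{n}\mapsto\prod_{i}n_{i}!/(n_{i}-y_{im})!$ are products of falling factorials whose leading monomials $\bm{n}^{\,y_{m}}$ are pairwise distinct, hence are linearly independent polynomials, so $\beta_{m}=0$ for all $m$; since $\Psi_{m}(\bm{c})>0$ this forces $\bm{A}\cdot\Psi(\bm{c})=0$, i.e., the network is complex balanced.

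The step I expect to need the most care is deriving the displayed flux identity uniformly on $\Gamma$, including at the boundary where $\bm{n}-\bm{V}_{j}$ may leave $\Gamma$: this is precisely why one should run the argument over stoichiometric compatibility classes, which are invariant under every reaction, so that $\bm{n}\in\Gamma$ and $\bm{n}-\bm{V}_{j}\in\mbb{N}_{\geq 0}^{N}$ already force $\bm{n}-\bm{V}_{j}\in\Gamma$; combined with the fact that a mass-action propensity of a positive-order reaction vanishes exactly when the corresponding backward shift would leave $\mbb{N}_{\geq 0}^{N}$, no spurious boundary terms survive. The second point worth stressing is that quantifying over \emph{all} $\Gamma$ is genuinely used: on a single low-dimensional compatibility class the restricted falling-factorial products can be linearly dependent, and only the union over all such classes supplies a Zariski-dense test set and hence true linear independence. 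A full verification would go in an appendix paralleling the one for Theorem~\ref{theo:diffusion.independent}.
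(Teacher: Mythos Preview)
Your argument is correct and arrives at the same endpoint as the paper, but by a genuinely different and more elementary route. The paper works in the Doi--Peliti formalism: it maps the RDME to a Schr\"odinger-type equation with Hamiltonian $\mca{H}(\bm{a}^{\dag},\bm{a})$, sums the states $|\psi(t)\rangle_{\Gamma}$ over all $\Gamma$ to obtain the coherent-like vector $e^{\sum_{v,i}\omega c_i a_{vi}^{\dag}}|0\rangle$, and then uses left coherent states $\langle\bm{\phi}|$ to convert the steady-state condition into a polynomial identity $\mca{H}(\bm{\phi}^{*},\omega\widetilde{\bm{c}})=0$ in the free variables $\bm{\phi}^{*}$; matching the coefficients of $\Psi_{i}(\bm{\phi}_{v}^{*})$ then yields $\bm{A}\cdot\Psi(\bm{c})=0$. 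You instead substitute the Poisson ansatz directly into the flux balance, group by complex, and invoke linear independence of the falling-factorial products $\prod_{i}n_{i}!/(n_{i}-y_{im})!$ over a Zariski-dense set of $\bm{n}$. Both proofs ultimately exploit the same structural fact---that distinct complexes give linearly independent monomial-type functions---but yours avoids the Fock-space machinery entirely, while the paper's coherent-state trick packages the ``vary over all $\Gamma$ and all $\bm{n}$'' step into a single algebraic identity and handles the multi-voxel diffusion term automatically.

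One small point deserves tightening: your reduction ``taking $|\mca{V}|=1$'' is not literally available, since the RDME in the paper is set up for a fixed voxel partition. What makes the reduction legitimate is that the diffusion part of the RDME annihilates the Poisson product identically (Eq.~\eqref{eq:diffusion.term}, independent of complex balance), so stationarity under the RDME is equivalent to the vanishing of the reaction term, which factorizes voxelwise; setting all $\bm{n}_{v}$ equal then isolates the single-voxel CME condition $g(\bm{m})=0$. Adding that one sentence would make the reduction airtight.
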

\begin{proof}
We use the Fock space representation \cite{Peliti.1985} to describe the molecule-number changes of each species inside each voxel. A state vector $|\bm{n}\rangle$ with configuration $\bm{n}$ means that $n_{vi}$ molecules of species $X_i$ exist in voxel $v$. Using the annihilation and creation operators $a_{vi},a_{vi}^{\dag}$, i.e., $a_{vi}|n_{vi}\rangle=n_{vi}|n_{vi}-1\rangle,a_{vi}^{\dag}|n_{vi}\rangle=|n_{vi}+1\rangle$, we can map the probability distribution $P_{\Gamma}(\bm{n},t)$ to a state vector $|\psi(t)\rangle_{\Gamma}$, defined by
\begin{equation}\label{eq:generating.function.bosonic}
|\psi(t)\rangle_{\Gamma}=\sum_{\bm{n}}P_{\Gamma}(\bm{n},t)|\bm{n}\rangle=\sum_{\bm{n}}P_{\Gamma}(\bm{n},t)(\bm{a}^{\dag})^{\bm{n}}|0\rangle.
\end{equation}
This expression sums over all possible configurations $\bm{n}$ weighted by their occurrence probabilities at time $t$. To establish the time evolution of this state vector, we apply the master equation to obtain the following Schr{\"o}dinger equation:
\begin{equation}
\partial_t|\psi(t)\rangle_{\Gamma}=-\mca{H}(\bm{a}^{\dag},\bm{a})|\psi(t)\rangle_{\Gamma},\label{eq:schrodinger.equation}
\end{equation}
where $\mca{H}(\bm{a}^{\dag},\bm{a})$ represents the Hamiltonian action on the Fock space, expressed as shown in Appendix \ref{app.theorem.2}. In general, $\mca{H}(\bm{a}^{\dag},\bm{a})$ is the sum of several sub-actions created by each reaction of the system, e.g., a reaction of the form $\sum_{i=1}^{N}s_{ij}X_i^v\xrightarrow{k_j}\sum_{i=1}^{N}r_{ij}X_i^v$ yields a sub-action $k_j\omega^{1-\sum_{i=1}^{N}s_{ij}}\big(\prod_{i=1}^N(a_{vi}^{\dag})^{r_{ij}}-\prod_{i=1}^N(a_{vi}^{\dag})^{s_{ij}}\big)\prod_{i=1}^N(a_{vi})^{s_{ij}}$. The action $\mca{H}(\bm{a}^{\dag},\bm{a})$ is considered to be normally ordered, i.e., $a_{vi}^{\dag}$ is always to the left of $a_{vi}$. In a steady-state system, $\mca{H}(\bm{a}^{\dag},\bm{a})|\psi(t)\rangle_{\Gamma}=0$. Consequently, $\mca{H}(\bm{a}^{\dag},\bm{a})|\psi(t)\rangle$ is also $0$, where the state $|\psi(t)\rangle$ is defined as follows:
\begin{equation*}
\begin{aligned}
|\psi(t)\rangle&=\sum_{\Gamma}\frac{|\psi(t)\rangle_{\Gamma}}{\mca{N}_{\Gamma}}=\prod_{i=1}^{N}\prod_{v=1}^{|\mca{V}|}\sum_{n_{vi}\geq 0}\frac{(\omega c_ia_{vi}^{\dag})^{n_{vi}}}{n_{vi}!}|0\rangle\\
&=e^{\sum_{v,i}\omega c_ia_{vi}^{\dag}}|0\rangle.
\end{aligned}
\end{equation*}
In other words, the following condition
\begin{equation}
\mca{H}(\bm{a}^{\dag},\bm{a})e^{\sum_{v,i}\omega c_ia_{vi}^{\dag}}|0\rangle=0\label{eq:condition.hamilton}
\end{equation}
must hold. To derive a further condition with no involvement of $\bm{a}^{\dag}$ and $\bm{a}$, we consider the coherent states $|\phi_{vi}\rangle$ and $\langle\phi_{vi}|$, defining the right and left eigenstates of $a_{vi}$ and $a_{vi}^{\dag}$, respectively. Specifically, $a_{vi}|\phi_{vi}\rangle=\phi_{vi}|\phi_{vi}\rangle$ and $\langle\phi_{vi}|a_{vi}^{\dag}=\langle\phi_{vi}|\phi_{vi}^*$, with complex eigenvalue $\phi_{vi}\in\mbb{C}$. Multiplying both sides of Eq.~\eqref{eq:condition.hamilton} by the left coherent state $\langle\bm{\phi}|$, we obtain
\begin{equation}
0=\langle\bm{\phi}|\mca{H}(\bm{a}^{\dag},\bm{a})e^{\sum_{v,i}\omega c_ia_{vi}^{\dag}}|0\rangle\Leftrightarrow 0=\mca{H}(\bm{\phi}^*,\omega\widetilde{\bm{c}}),\label{eq:H.condition}
\end{equation}
where $\widetilde{\bm{c}}\in\mbb{R}^{|\mca{V}|N}$ is defined as $\widetilde{c}_{vi}=c_i$.
As $\mca{H}(\bm{\phi}^*,\omega\widetilde{\bm{c}})$ is a polynomial of $\bm{\phi}^*$, this result is possible only when the coefficients of all monomials are zero. Each reaction of the form $C_i\xrightarrow{a_{ii'}}C_{i'}$ in voxel $v$ contributes to $\mca{H}(\bm{\phi}^*,\omega\widetilde{\bm{c}})$ a quantity $\omega a_{ii'}\left(\Psi_{i'}(\bm{\phi}_v^*)-\Psi_i(\bm{\phi}_v^*)\right)\Psi_i(\bm{c})$. Therefore, by collecting the coefficients of $\Psi_i(\bm{\phi}_v^*)$ for each $i=1,\dots,M$ and $v\in\mca{V}$, we obtain the following relation:
\begin{equation}
\mca{H}(\bm{\phi}^*,\omega\widetilde{\bm{c}})=0,~\forall\bm{\phi}\in\mbb{C}^{|\mca{V}|N}\Leftrightarrow\bm{A}\cdot\Psi(\bm{c})=0,\label{eq:final.condition}
\end{equation}
meaning that the network is complex balanced at $\bm{c}$.
The details of these calculations are shown in Appendix \ref{app.theorem.2}.
From these results, we conclude that the necessary and sufficient condition for a steady-state distribution (Eq.~\eqref{eq:solution.prob.form}) is that the network is complex balanced.
\end{proof}
We note that the sufficient condition of Theorem \ref{theo:necessary.sufficient.condition} has been studied in Ref.~\cite{Lubensky.2010} (i.e., if the network is complex balanced, then the steady-state distribution has a form as in Eq.~\eqref{eq:solution.prob.form}).
Above we investigate the relation between diffusion and the distributions of the reactant species in steady state. We now present another result that holds under non-steady-state conditions.
\begin{theorem}\label{theo:linear.networks}
When the reaction network is linear, the RDME can be reduced to the CME. Equivalently, the diffusion can be ignored in such case.
\end{theorem}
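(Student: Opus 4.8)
The plan is to derive the CME directly from the RDME by summing out the spatial degrees of freedom. For a total molecule-number vector $\widehat{\bm{n}}=\sum_{v\in\mca{V}}\bm{n}_v$, introduce the marginal $P(\widehat{\bm{n}},t)=\sum_{\bm{n}:\,\sum_v\bm{n}_v=\widehat{\bm{n}}}P(\bm{n},t)$, sum both sides of the RDME~\eqref{eq:rdme} over the fiber $\{\bm{n}:\sum_v\bm{n}_v=\widehat{\bm{n}}\}$, and show that the right-hand side collapses exactly to $\sum_{j=1}^{K}(\bm{\mbb{E}}^{-\bm{V}_j}-1)f_j(\widehat{\bm{n}},\Omega)P(\widehat{\bm{n}},t)$, i.e., the right-hand side of the CME~\eqref{eq:cme} with the total counts as state variables. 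Because the CME has a unique solution for a given initial condition, this implies that, whenever the RDME and the CME start from the same distribution of total counts, the statistics of the reactant numbers agree at every time $t$; equivalently, diffusion is irrelevant for the stochastic dynamics of the network.

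First I would dispose of the diffusion term. A diffusion event $X_i^v\to X_i^{v'}$ is implemented by the shift $\bm{\mbb{E}}^{\bm{1}_{vi}-\bm{1}_{v'i}}$, and $\bm{1}_{vi}-\bm{1}_{v'i}$ has zero total over voxels, so this shift permutes the fiber $\{\bm{n}:\sum_v\bm{n}_v=\widehat{\bm{n}}\}$. Hence, upon summing $(\bm{\mbb{E}}^{\bm{1}_{vi}-\bm{1}_{v'i}}-1)\,d_in_{vi}P(\bm{n},t)$ over the fiber, the ``gain'' contribution is merely a reindexing of the ``loss'' contribution, and the two cancel for every triple $(v,v',i)$; the whole first line of Eq.~\eqref{eq:rdme} therefore drops out after marginalization. (The standard convention that propensities vanish on states with a negative component makes the reindexing exact at the boundary of the state space.)

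The crux is the reaction term, and this is where linearity is used. For a linear network $\sum_{i=1}^N s_{ij}\in\{0,1\}$ for every $j$, so by Eq.~\eqref{eq:spatial.propensity.func} the local propensity is either the constant $f_{vj}=k_j\omega$ (when $\sum_i s_{ij}=0$) or $f_{vj}=k_jn_{vi_0}$ (when $s_{i_0j}=1$ for the unique reactant $i_0$). Using $\Omega=|\mca{V}|\omega$, in both cases
\begin{equation*}
\sum_{v\in\mca{V}}f_{vj}(\bm{n},\omega)=f_j(\widehat{\bm{n}},\Omega),
\end{equation*}
which, crucially, is a function of $\widehat{\bm{n}}$ alone. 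Moreover the spatial stoichiometric shift $\widetilde{\bm{V}}_{vj}$ maps to $\bm{V}_j$ under $\bm{n}\mapsto\sum_v\bm{n}_v$. Summing the second line of Eq.~\eqref{eq:rdme} over the fiber, reindexing the gain terms by $\bm{n}\mapsto\bm{n}-\widetilde{\bm{V}}_{vj}$ (which sends the fiber over $\widehat{\bm{n}}$ to the fiber over $\widehat{\bm{n}}-\bm{V}_j$), and then applying the identity above together with $\sum_{\bm{n}:\,\sum_v\bm{n}_v=\widehat{\bm{n}}}P(\bm{n},t)=P(\widehat{\bm{n}},t)$, the double sum over $v$ and $j$ becomes $\sum_{j=1}^{K}\big(f_j(\widehat{\bm{n}}-\bm{V}_j,\Omega)P(\widehat{\bm{n}}-\bm{V}_j,t)-f_j(\widehat{\bm{n}},\Omega)P(\widehat{\bm{n}},t)\big)$, which is exactly the CME right-hand side. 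Together with the vanishing of the diffusion term this gives $\partial_tP(\widehat{\bm{n}},t)=\sum_{j=1}^{K}(\bm{\mbb{E}}^{-\bm{V}_j}-1)f_j(\widehat{\bm{n}},\Omega)P(\widehat{\bm{n}},t)$, completing the reduction.

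The main obstacle --- and the reason the statement is limited to linear networks --- is precisely the identity $\sum_v f_{vj}(\bm{n},\omega)=f_j(\widehat{\bm{n}},\Omega)$. For a nonlinear reaction ($\sum_i s_{ij}\geq 2$), $\sum_v f_{vj}(\bm{n},\omega)$ involves sums such as $\sum_v n_{vi}(n_{vi}-1)$ or $\sum_v n_{vi}n_{vi'}$, which depend on how the molecules are partitioned among the voxels and cannot be expressed through $\widehat{\bm{n}}$ alone; the marginalization then fails to close, which is consistent with Theorems~\ref{theo:diffusion.independent} and~\ref{theo:necessary.sufficient.condition} needing the stronger complex-balance structure rather than mere linearity. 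A minor technical point is the careful handling of boundary terms (states with a negative component), which is taken care of by the convention that the corresponding propensities vanish, so that all the reindexings above are exact.
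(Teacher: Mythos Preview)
Your proposal is correct and follows essentially the same approach as the paper: marginalize the RDME over the fiber $\{\bm{n}:\sum_v\bm{n}_v=\widehat{\bm{n}}\}$, cancel the diffusion term via the fiber-preserving reindexing $\bm{n}\mapsto\bm{n}+\bm{1}_{vi}-\bm{1}_{v'i}$, and reduce the reaction term using linearity. Your organization via the single identity $\sum_{v}f_{vj}(\bm{n},\omega)=f_j(\widehat{\bm{n}},\Omega)$ is slightly cleaner than the paper's explicit case split between $f_{vj}=k_j n_{vi}$ and $f_{vj}=k_j\omega$, but the underlying argument is the same.
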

\begin{proof}
The system volume $\Omega$ is related to the voxel volume $\omega$ as $\Omega=|\mca{V}|\omega$. Now, for each state vector $\widehat{\bm{n}}=[\widehat{n}_1,\widehat{n}_2,\dots,\widehat{n}_N]^{\top}\in\mbb{N}_{\geq 0}^{N}$ representing the number of molecules of the reactant species, i.e., $\widehat{n}_i$ is the total number of molecules of species $X_i$ in the system, we define the set $\mca{S}(\widehat{\bm{n}})=\big\{\bm{n}\in\mbb{N}_{\geq 0}^{|\mca{V}|N} \mid \sum_{v\in \mca{V}}\bm{n}_{v}=\widehat{\bm{n}}\big\}$. Let $P(\widehat{\bm{n}},t)$ be the probability of the system being in state $\widehat{\bm{n}}$ at time $t$. In terms of $P(\bm{n},t)$, this probability becomes $P(\widehat{\bm{n}},t)=\sum_{\bm{n}\in \mca{S}(\widehat{\bm{n}})}P(\bm{n},t)$. As $\sum_{\widehat{\bm{n}}}P(\widehat{\bm{n}},t)=\sum_{\bm{n}}P(\bm{n},t)=1$, $P(\widehat{\bm{n}},t)$ is a probability distribution. To show that this probability distribution satisfies the CME given by Eq.~\eqref{eq:cme}, we calculate the time derivative of $P(\widehat{\bm{n}},t)$ as follows:
\begin{equation}\label{eq:time.deriv}
\partial_tP(\widehat{\bm{n}},t)=\sum_{\bm{n}\in \mca{S}(\widehat{\bm{n}})}\partial_tP(\bm{n},t).
\end{equation}
Substituting Eq.~\eqref{eq:rdme} into the right-hand side of Eq.~\eqref{eq:time.deriv}, we obtain an equation with both diffusion and reaction terms on the right. After some algebraic transformations, the diffusion term disappears and only the reaction term remains (see Appendix \ref{app.theorem.3}). As the reaction network is linear, i.e., $\sum_{i}s_{ij}\leq 1~\forall~j=1,\dots,K$, the propensity function $f_{vj}(\bm{n},\omega)$ must be one of two forms: $f_{vj}(\bm{n},\omega)=k_jn_{vi}$ or $k_j\omega$. Substituting the exact form of each propensity function into Eq.~\eqref{eq:time.deriv}, we finally obtain the following master equation for $P(\widehat{\bm{n}},t)$:
\begin{equation}\label{eq:rdme.derived.cme}
\partial_tP(\widehat{\bm{n}},t)=\sum_{j=1}^{K}\left(\bm{\mbb{E}}^{-\bm{V}_j}-1\right)f_j(\widehat{\bm{n}},\Omega)P(\widehat{\bm{n}},t).
\end{equation}
Obviously, this differential equation is identical to the CME stated in Eq.~\eqref{eq:cme}, and contains no diffusion factors. Therefore, it can be concluded that diffusion can be ignored in linear reaction networks.
\end{proof}
In Theorem \ref{theo:linear.networks}, we demonstrate that the RDME reduces to the CME in the case of linear reaction networks, which implies that diffusion does not affect the stochastic dynamics of the system at an arbitrary time. From the view of the Smoluchowski model, this statement appears to be obvious.
By regarding the network as interacting many-particle system and introducing diffusion and reaction operators \cite{Doi.1976.1}, the same result can be derived.
However, it is not evident from the view of the RDME. The agreement of results in these different models serves as validation for the RDME.
\begin{figure}[t]
	\centering
	\includegraphics[width=0.47\textwidth]{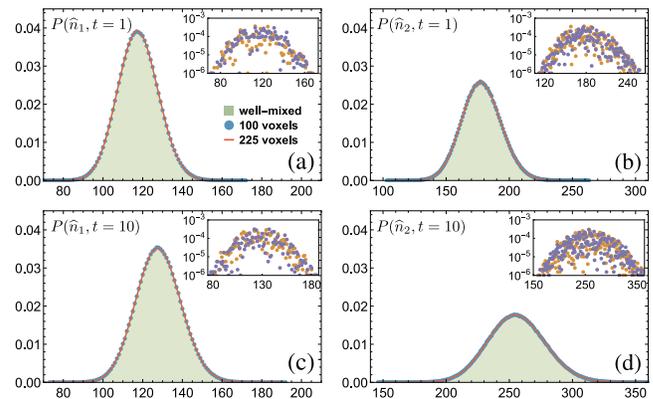}
	\caption{Probability distributions (a) $P(\widehat{n}_1,t=1)$, (b) $P(\widehat{n}_2,t=1)$, (c) $P(\widehat{n}_1,t=10)$, and (d) $P(\widehat{n}_2,t=10)$ of two species $X_1,X_2$ at times $t=1$ (upper panels) and $t=10$ (lower panels) of linear reaction network. Each panel shows the distributions of the 1-voxel system (green region), 100-voxel system (blue dots), and 225-voxel system (red line). The parameters are $k_1=1,k_2=1,k_3=2,k_4=1,\Omega=128$. The diffusion rates of species $X_1,X_2$ are $d_1=1,d_2=2$ (100 voxels) and $d_1=2,d_2=1$ (225 voxels). Insets show the absolute probability differences $|P_{100}-P_1|$ (orange dots) and $|P_{225}-P_1|$ (violet dots), where $P_1, P_{100}$ and $P_{225}$ indicate the probability in the 1-, 100-, and 225-voxel systems, respectively.}\label{fig:linear.prob.dis}
\end{figure}

We numerically verify the result of Theorem \ref{theo:linear.networks} on a simple linear reaction network, namely, a coarse-grained model of enzymatic reactions and gene expressions. The network consists of two reactant species $X_1$ and $X_2$ and four reactions \cite{Nicola.2006,Tostevin.2010}:
\begin{equation}
\varnothing\mathrel{\mathop{\rightleftarrows}_{k_2}^{k_1}}X_1,X_1\xrightarrow{k_3}X_1+X_2,X_2\xrightarrow{k_4}\varnothing.
\end{equation}
Again, we divide the cell volume into 1, 100, and 225 voxels with different diffusion coefficients of $X_1$ and $X_2$. The result is displayed in Fig.~\ref{fig:linear.prob.dis}. As before, the distributions of each species at times $t=1$ and $t=10$ are identical in all three cases. These numerical results empirically validate Theorem \ref{theo:linear.networks}.

\section{Conclusions}
In summary, within the approximation of the RDME, we proved that diffusion in complex-balanced networks does not affect the steady-state distribution of the system. We also showed that a diffusion-included reaction network has a Poisson-like steady-state distribution if and only if it is complex balanced, analogously to the well-mixed case described in \cite{Daniele.2016}.
Moreover, we demonstrated that the RDME can be reduced to the CME in the case of linear reaction networks.
These results help to clarify the conditions under which diffusion is negligible. Under such conditions, the system can be described by the CME instead of the intractable RDME. In nonlinear networks that are not complex-balanced, how diffusion affects the stochastic system dynamics, or whether it can be ignored, requires further investigation.

It appears that functional biological networks satisfying the complex balanced condition are not widespread in real-world systems.
Nevertheless, weakly reversible networks have been successfully applied in modeling signal transduction pathways \cite{Friedmann.2013} and asymmetric stem-cell division \cite{Mayer.2005}.
Besides that, complex balanced networks whose Fano factor is equal to one can be used in analyzing and approximating cascade networks or metabolic pathways, wherein noise is not propagated from upstream to downstream \cite{Levine.2007}.
Although our results are obtained with the approximation of the RDME, it is expected that the derived diffusion-dynamics laws provide suggestive results for real physical systems.

\section*{Acknowledgment}
This work was supported by MEXT KAKENHI Grant No.~JP16K00325.

\appendix

\section{Detailed calculations in Theorem 1}\label{app.theorem.1}
\subsection{Detailed calculations of Equation~\eqref{eq:diffusion.term}}
The detailed calculation of Eq.~\eqref{eq:diffusion.term} is given below:
\begin{widetext}
\begin{align*}
&\sum_{v\in \mca{V}}\sum_{v'\in  N_e(v)}\sum_{i=1}^{N}\left(\bm{\mrm{E}}^{\bm{1}_{vi}-\bm{1}_{v'i}}-1\right)d_{i}n_{vi}P_{\Gamma}(\bm{n},t)\\
&=\sum_{v\in \mca{V}}\sum_{v'\in  N_e(v)}\sum_{i=1}^{N}\bigg(d_{i}(n_{vi}+1)P_{\Gamma}(\bm{n}+\bm{1}_{vi}-\bm{1}_{v'i},t)-d_{i}n_{vi}P_{\Gamma}(\bm{n},t)\bigg)\\
&=\sum_{i=1}^{N}d_i\sum_{v\in \mca{V}}\sum_{v'\in  N_e(v)}\bigg(\mca{N}_{\Gamma}(n_{vi}+1)\frac{(\omega c_i)^{n_{vi}+1}}{(n_{vi}+1)!}\frac{(\omega c_i)^{n_{v'i}-1}}{(n_{v'i}-1)!}\prod_{i'\neq i}\frac{(\omega c_{i'})^{n_{vi'}}(\omega c_{i'})^{n_{v'i'}}}{n_{vi'}!n_{v'i'}!}\prod_{\widetilde{v}\neq v,v'}P_{\Gamma}^*(\bm{n}_{\widetilde{v}},t)-n_{vi}P_{\Gamma}(\bm{n},t)\bigg)\\
&=\sum_{i=1}^{N}d_i\sum_{v\in \mca{V}}\sum_{v'\in  N_e(v)}\bigg(\mca{N}_{\Gamma}\frac{(\omega c_i)^{n_{vi}}}{n_{vi}!}\frac{(\omega c_i)^{n_{v'i}}}{(n_{v'i}-1)!}\prod_{i'\neq i}\frac{(\omega c_{i'})^{n_{vi'}}(\omega c_{i'})^{n_{v'i'}}}{n_{vi'}!n_{v'i'}!}\prod_{\widetilde{v}\neq v,v'}P_{\Gamma}^*(\bm{n}_{\widetilde{v}},t)-n_{vi}P_{\Gamma}(\bm{n},t)\bigg)\\
&=\sum_{i=1}^{N}d_i\sum_{v\in \mca{V}}\sum_{v'\in  N_e(v)}\bigg(n_{v'i}P_{\Gamma}(\bm{n},t)-n_{vi}P_{\Gamma}(\bm{n},t)\bigg)\\
&=0.
\end{align*}
\end{widetext}
\subsection{Detailed calculations of Equation~\eqref{eq:reaction.term}}
We here reveal the details of Eq.~\eqref{eq:reaction.term}.
The equation $\bm{A}\cdot\Psi(\bm{c})=0$ means that $\sum_{C_i\to C_{i'}}a_{ii'}\Psi_{i}(\bm{c})-\sum_{C_{i'}\to C_i}a_{i'i}\Psi_{i'}(\bm{c})=0$ for each complex $C_{i'}\in\mca{C}$. The left side of Eq.~\eqref{eq:reaction.term} can be transformed as
\begin{widetext}
\begin{align*}
&\sum_{j=1}^{K}\left(\bm{\mbb{E}}^{-\bm{V}_{j}}-1\right)f_{j}(\bm{n}_v,\omega)P_{\Gamma}^*(\bm{n}_v,t)\\
&=\sum_{j=1}^{K}\left(f_j(\bm{n}_v-\bm{V}_j,\omega)P_{\Gamma}^*(\bm{n}_v-\bm{V}_j,t)-f_{j}(\bm{n}_v,\omega)P_{\Gamma}^*(\bm{n}_v,t)\right)\\
&=\sum_{C_i\to C_{i'}}a_{ii'}\omega^{1-\sum_{k=1}^{N}y_{ki}}\left(\prod_{k=1}^N\frac{(n_{vk}+y_{ki}-y_{ki'})!}{(n_{vk}-y_{ki'})!}\frac{(\omega c_k)^{n_{vk}+y_{ki}-y_{ki'}}}{(n_{vk}+y_{ki}-y_{ki'})!}-\prod_{k=1}^N\frac{n_{vk}!}{(n_{vk}-y_{ki})!}\frac{(\omega c_k)^{n_{vk}}}{n_{vk}!}\right)\\
&=\sum_{C_i\to C_{i'}}a_{ii'}\omega^{1-\sum_{k=1}^{N}y_{ki}}\left(\prod_{k=1}^N\frac{(\omega c_k)^{n_{vk}+y_{ki}-y_{ki'}}}{(n_{vk}-y_{ki'})!}-\prod_{k=1}^N\frac{(\omega c_k)^{n_{vk}}}{(n_{vk}-y_{ki})!}\right)\\
&=\omega\sum_{C_i\to C_{i'}}\left(a_{ii'}\frac{\Psi_{i}(\bm{c})}{\Psi_{i'}(\bm{c})}\prod_{k=1}^N\frac{\omega^{n_{vk}-y_{ki'}}c_k^{n_{vk}}}{(n_{vk}-y_{ki'})!}-a_{ii'}\prod_{k=1}^N\frac{\omega^{n_{vk}-y_{ki}}c_k^{n_{vk}}}{(n_{vk}-y_{ki})!}\right)\\
&=\omega\sum_{C_{i'}\in\mca{C}}\left(\sum_{C_i\to C_{i'}}a_{ii'}\frac{\Psi_{i}(\bm{c})}{\Psi_{i'}(\bm{c})}\prod_{k=1}^N\frac{\omega^{n_{vk}-y_{ki'}}c_k^{n_{vk}}}{(n_{vk}-y_{ki'})!}-\sum_{C_{i'}\to C_i}a_{i'i}\prod_{k=1}^N\frac{\omega^{n_{vk}-y_{ki'}}c_k^{n_{vk}}}{(n_{vk}-y_{ki'})!}\right)\\
&=\omega\sum_{C_{i'}\in\mca{C}}\frac{1}{\Psi_{i'}(\bm{c})}\prod_{k=1}^N\frac{\omega^{n_{vk}-y_{ki'}}c_k^{n_{vk}}}{(n_{vk}-y_{ki'})!}\left(\sum_{C_i\to C_{i'}}a_{ii'}\Psi_{i}(\bm{c})-\sum_{C_{i'}\to C_i}a_{i'i}\Psi_{i'}(\bm{c})\right)\\
&=0.
\end{align*}
\end{widetext}
The same result (but omitting the details) is given in Ref. \cite{Anderson.2010}.

\subsection{Detailed calculations of Equation~\eqref{eq:explicit.distribution}}
Finally, we compute the explicit form of the distribution $P_{\Gamma}(\widehat{\bm{n}})$ in Eq.~\eqref{eq:explicit.distribution}. We have
\begin{align*}
P_{\Gamma}(\widehat{\bm{n}})&=\sum_{\bm{n}:\sum_{v}\bm{n}_v=\widehat{\bm{n}}}P_{\Gamma}(\bm{n},t)\\
&=\mca{N}_{\Gamma}\sum_{\substack{\bm{n}_1,\dots,\bm{n}_{|\mca{V}|}\in\mbb{N}_{\geq 0}^{N}\\\sum_{v}\bm{n}_v=\widehat{\bm{n}}}}\prod_{v\in\mca{V}}\prod_{i=1}^{N}\frac{(\omega c_i)^{n_{vi}}}{n_{vi}!}\\
&=\mca{N}_{\Gamma}\prod_{i=1}^{N}\Bigg\{\sum_{\substack{n_{1i},\dots,n_{|\mca{V}|i}\geq 0\\\sum_{v\in\mca{V}}n_{vi}=\widehat{n}_i}}\prod_{v\in\mca{V}}\frac{(\omega c_i)^{n_{vi}}}{n_{vi}!}\Bigg\}\\
&=\mca{N}_{\Gamma}\prod_{i=1}^{N}\Bigg\{\frac{(\omega c_i)^{\widehat{n}_i}}{\widehat{n}_i!}\sum_{\substack{n_{1i},\dots,n_{|\mca{V}|i}\geq 0\\\sum_{v\in\mca{V}}n_{vi}=\widehat{n}_i}}\frac{\left(\sum_{v\in\mca{V}}n_{vi}\right)!}{\prod_{v\in\mca{V}}n_{vi}!}\Bigg\}\\
&=\mca{N}_{\Gamma}\prod_{i=1}^{N}\frac{(\omega c_i)^{\widehat{n}_i}}{\widehat{n}_i!}|\mca{V}|^{\widehat{n}_i}=\mca{N}_{\Gamma}\prod_{i=1}^{N}\frac{(\Omega c_i)^{\widehat{n}_i}}{\widehat{n}_i!}.
\end{align*}
In transforming the fourth equation to the fifth one, we exploited the following equality:
\begin{equation*}
\sum_{\substack{x_1,\dots,x_m\geq0\\\sum_{i=1}^{m}x_i=n}}\frac{(x_1+\dots+x_m)!}{x_1!\dots x_m!}=m^n,~\forall m\in\mathbb{N}_{>0},n\in\mathbb{N}_{\geq 0}.
\end{equation*}

\section{Detailed calculations in Theorem 2}\label{app.theorem.2}
Before presenting the calculations, we state several properties of the bosonic operators $a_{vi}^{\dag}$ and $a_{vi}$.
\begin{align*}
|n_{vi}\rangle&=(a_{vi}^{\dag})^{n_{vi}}|0\rangle,\\
(a_{vi})^l(a_{vi}^{\dag})^{k}|n_{vi}\rangle&=\prod_{j=0}^{l-1}(n_{vi}+k-j)|n_{vi}+k-l\rangle,\\
(a_{vi}^{\dag})^{k}(a_{vi})^l|n_{vi}\rangle&=\prod_{j=0}^{l-1}(n_{vi}-j)|n_{vi}+k-l\rangle,\\
[a_{vi},a_{v'i'}^{\dag}]&=a_{vi}a_{v'i'}^{\dag}-a_{v'i'}^{\dag}a_{vi}=\delta_{vv'}\delta_{ii'},\\
[a_{vi}^{\dag},a_{v'i'}^{\dag}]&=[a_{vi},a_{v'i'}]=0.
\end{align*}
For a general configuration $\bm{n}$, we define the corresponding state vector $|\bm{n}\rangle$ as
\begin{equation}
|\bm{n}\rangle=(\bm{a}^{\dag})^{\bm{n}}|0\rangle=\prod_{v\in\mca{V}}\prod_{i=1}^N(a_{vi}^{\dag})^{n_{vi}}|0\rangle.
\end{equation}
For convenience, we note that
\begin{align}
	e^{ca_{vi}}f(a_{vi}^{\dag})&=f(a_{vi}^{\dag}+c)e^{ca_{vi}},\label{eq:prop.1} \\
	e^{ca_{vi}^{\dag}}f(a_{vi})&=f(a_{vi}-c)e^{ca_{vi}^{\dag}},\label{eq:prop.2}
\end{align}
where $c\in\mathbb{C}$ is a complex number and $f$ is an arbitrary function.

\subsection{Detailed calculations of Equation~\eqref{eq:schrodinger.equation}}
We first derive the explicit form of the Hamiltonian action $\mca{H}(\bm{a}^{\dag},\bm{a})$ in Eq.~\eqref{eq:schrodinger.equation}. Suppose that the network contains a set $\mca{R}$ of reactions $R_j$ of the general form $\sum_{v\in\mca{V}}\sum_{i=1}^Np_{vi}^jX_{i}^v\xrightarrow{k_j}\sum_{v\in\mca{V}}\sum_{i=1}^Nq_{vi}^jX_{i}^v$, where $p_{vi}^j$ and $q_{vi}^j$ are the stoichiometric coefficients. For each reaction $R_j$, we define a stoichiometric vector $\bm{V}^j\in\mbb{Z}^{|\mca{V}|N}$ as $V_{vi}^j=q_{vi}^j-p_{vi}^j$. Starting from the master equation, we have
\begin{widetext}
\begin{align*}
\partial_t|\psi(t)\rangle_{\Gamma}&=\sum_{\bm{n}}\partial_tP_{\Gamma}(\bm{n},t)(\bm{a}^{\dag})^{\bm{n}}|0\rangle\\
&=\sum_{\bm{n}}\sum_{R_j\in\mca{R}}k_j\omega^{1-\sum_{v,i}p_{vi}^j}\left[\prod_{v,i}\frac{(n_{vi}+p_{vi}^j-q_{vi}^j)!}{(n_{vi}-q_{vi}^j)!}P_{\Gamma}(\bm{n}-\bm{V}^j,t)-\prod_{v,i}\frac{n_{vi}!}{(n_{vi}-p_{vi}^j)!}P_{\Gamma}(\bm{n},t)\right](\bm{a}^{\dag})^{\bm{n}}|0\rangle.
\end{align*}
Note that the two terms inside the bracket can be obtained using operators as follows:
\begin{align*}
\prod_{v,i}\frac{(n_{vi}+p_{vi}^j-q_{vi}^j)!}{(n_{vi}-q_{vi}^j)!}P_{\Gamma}(\bm{n}-\bm{V}^j,t)(\bm{a}^{\dag})^{\bm{n}}|0\rangle&=\prod_{v,i}(a_{vi}^{\dag})^{q_{vi}^j}(a_{vi})^{p_{vi}^j}P_{\Gamma}(\bm{n}-\bm{V}^j,t)(\bm{a}^{\dag})^{\bm{n}-\bm{V}^j}|0\rangle,\\
\prod_{v,i}\frac{n_{vi}!}{(n_{vi}-p_{vi}^j)!}P_{\Gamma}(\bm{n},t)(\bm{a}^{\dag})^{\bm{n}}|0\rangle&=\prod_{v,i}(a_{vi}^{\dag})^{p_{vi}^j}(a_{vi})^{p_{vi}^j}P_{\Gamma}(\bm{n},t)(\bm{a}^{\dag})^{\bm{n}}|0\rangle.
\end{align*}
Using these equalities, $\partial_t|\psi(t)\rangle_{\Gamma}$ is calculated as follows:
\begin{align*}
&\partial_t|\psi(t)\rangle_{\Gamma}\\
&=\sum_{\bm{n}}\sum_{R_j\in\mca{R}}k_j\omega^{1-\sum_{v,i}p_{vi}^j}\left[\prod_{v,i}(a_{vi}^{\dag})^{q_{vi}^j}(a_{vi})^{p_{vi}^j}P_{\Gamma}(\bm{n}-\bm{V}^j,t)(\bm{a}^{\dag})^{\bm{n}-\bm{V}^j}|0\rangle-\prod_{v,i}(a_{vi}^{\dag})^{p_{vi}^j}(a_{vi})^{p_{vi}^j}P_{\Gamma}(\bm{n},t)(\bm{a}^{\dag})^{\bm{n}}|0\rangle\right]\\
&=\sum_{R_j\in\mca{R}}k_j\omega^{1-\sum_{v,i}p_{vi}^j}\left[\prod_{v,i}(a_{vi}^{\dag})^{q_{vi}^j}(a_{vi})^{p_{vi}^j}\sum_{\bm{n}}P_{\Gamma}(\bm{n}-\bm{V}^j,t)(\bm{a}^{\dag})^{\bm{n}-\bm{V}^j}|0\rangle-\prod_{v,i}(a_{vi}^{\dag})^{p_{vi}^j}(a_{vi})^{p_{vi}^j}\sum_{\bm{n}}P_{\Gamma}(\bm{n},t)(\bm{a}^{\dag})^{\bm{n}}|0\rangle\right]\\
&=\sum_{R_j\in\mca{R}}k_j\omega^{1-\sum_{v,i}p_{vi}^j}\left[\prod_{v,i}(a_{vi}^{\dag})^{q_{vi}^j}(a_{vi})^{p_{vi}^j}-\prod_{v,i}(a_{vi}^{\dag})^{p_{vi}^j}(a_{vi})^{p_{vi}^j}\right]|\psi(t)\rangle_{\Gamma}.
\end{align*}
Thus, the general form of $\mca{H}$ is obtained as
\begin{equation}
\mca{H}(\bm{a}^{\dag},\bm{a})=\sum_{R_j\in\mca{R}}k_j\omega^{1-\sum_{v,i}p_{vi}^j}\left[\prod_{v,i}(a_{vi}^{\dag})^{q_{vi}^j}-\prod_{v,i}(a_{vi}^{\dag})^{p_{vi}^j}\right]\prod_{v,i}(a_{vi})^{p_{vi}^j}.
\end{equation}
For a diffusion-included reaction network involving the following reactions
\begin{align*}
&s_{1j}X_1^v+\dots+s_{Nj}X_N^v\xrightarrow{k_j}r_{1j}X_1^v+\dots+r_{Nj}X_N^v,\\
&X_{i}^{v}\xrightarrow{d_i}X_{i}^{v'},~\forall~1\leq i\leq N,v\in\mca{V},v'\in N_e(v),
\end{align*}
the Hamiltonian action $\mca{H}(\bm{a}^{\dag},\bm{a})$ in Eq.~\eqref{eq:schrodinger.equation} takes the following form
\begin{equation}
\mca{H}(\bm{a}^{\dag},\bm{a})=\sum_{j=1}^{K}\sum_{v\in\mca{V}}k_j\omega^{1-\sum_{i=1}^{N}s_{ij}}\left[\prod_{i=1}^N(a_{vi}^{\dag})^{r_{ij}}-\prod_{i=1}^{N}(a_{vi}^{\dag})^{s_{ij}}\right]\prod_{i=1}^N(a_{vi})^{s_{ij}}+\sum_{i=1}^N\sum_{v\in\mca{V}}\sum_{v'\in N_e(v)}d_i(a_{v'i}^{\dag}-a_{vi}^{\dag})a_{vi}.
\end{equation}
\end{widetext}
We note that this form of $\mca{H}$ is already normal-ordered.

\subsection{Detailed calculations of Equation~\eqref{eq:H.condition}}
Equation~\eqref{eq:H.condition} is derived through the following steps:
\begin{align}
0&=\langle\bm{\phi}|\mca{H}(\bm{a}^{\dag},\bm{a})e^{\sum_{v,i}\omega c_ia_{vi}^{\dag}}|0\rangle\\
\Leftrightarrow 0&=\langle\bm{\phi}|e^{\sum_{v,i}\omega c_ia_{vi}^{\dag}}\mca{H}(\bm{a}^{\dag},\bm{a}+\omega\widetilde{\bm{c}})|0\rangle\label{eq:eq17.2} \\
\Leftrightarrow 0&=e^{\sum_{v,i}\omega c_i\phi_{vi}^{*}}\langle\bm{\phi}|\mathcal{H}(\bm{a}^{\dag},\bm{a}+\omega\widetilde{\bm{c}})|0\rangle\\
\Leftrightarrow 0&=e^{\sum_{v,i}\omega c_i\phi_{vi}^{*}}\langle\bm{\phi}|\mathcal{H}(\bm{\phi}^{*},\omega\widetilde{\bm{c}})|0\rangle\label{eq:eq17.4} \\
\Leftrightarrow 0&=e^{\sum_{v,i}\omega c_i\phi_{vi}^{*}}\mathcal{H}(\bm{\phi}^{*},\omega\widetilde{\bm{c}})\langle\bm{\phi}|0\rangle\\
\Leftrightarrow 0&=\mca{H}(\bm{\phi}^*,\omega\widetilde{\bm{c}}).\label{eq:eq17.6}
\end{align}
In Eq.~\eqref{eq:eq17.2}, we use the property stated in Eq.~\eqref{eq:prop.2}. In Eq.~\eqref{eq:eq17.4}, the operators $a_{vi}~(v\in\mca{V},i=1,\dots,N)$ are absorbed into $|0\rangle~(\because a_{vi}|0\rangle=0)$, and the operators $a_{vi}^{\dag}$ are replaced by $\phi_{vi}^*~(\because \langle\phi_{vi}|a_{vi}^{\dag}=\langle\phi_{vi}|\phi_{vi}^*)$. The result Eq.~\eqref{eq:eq17.6} is obtained by noting that
\begin{align*}
e^{\sum_{v,i}\omega c_i\phi_{vi}^{*}}&\neq 0,\\
\langle\bm{\phi}|0\rangle=\prod_{v,i}\langle\phi_{vi}|0\rangle=\prod_{v,i}e^{-\frac{1}{2}|\phi_{vi}|^2}&\neq 0.
\end{align*}

\subsection{Detailed calculations of Equation~\eqref{eq:final.condition}}
Equation~\eqref{eq:final.condition} is given by
\begin{equation*}
\mca{H}(\bm{\phi}^*,\omega\widetilde{\bm{c}})=0,~\forall\bm{\phi}\in\mbb{C}^{|\mca{V}|N}\Leftrightarrow\bm{A}\cdot\Psi(\bm{c})=0.
\end{equation*}
The Hamiltonian action $\mca{H}(\bm{a}^{\dag},\bm{a})$ is described by
\begin{widetext}
\begin{equation*}
\mca{H}(\bm{a}^{\dag},\bm{a})=\sum_{j=1}^{K}\sum_{v\in\mca{V}}k_j\omega^{1-\sum_{i=1}^{N}s_{ij}}\left[\prod_{i=1}^N(a_{vi}^{\dag})^{r_{ij}}-\prod_{i=1}^{N}(a_{vi}^{\dag})^{s_{ij}}\right]\prod_{i=1}^N(a_{vi})^{s_{ij}}+\sum_{i=1}^N\sum_{v\in\mca{V}}\sum_{v'\in N_e(v)}d_i(a_{v'i}^{\dag}-a_{vi}^{\dag})a_{vi}.
\end{equation*}
The case $\mca{H}(\bm{\phi}^*,\omega\widetilde{\bm{c}})=0$ is equivalent to
\begin{align*}
&\sum_{j=1}^{K}\sum_{v\in\mca{V}}k_j\omega^{1-\sum_{i=1}^{N}s_{ij}}\left[\prod_{i=1}^N(\phi_{vi}^{*})^{r_{ij}}-\prod_{i=1}^{N}(\phi_{vi}^{*})^{s_{ij}}\right]\prod_{i=1}^N(\omega\widetilde{c}_{vi})^{s_{ij}}+\sum_{i=1}^N\sum_{v\in\mca{V}}\sum_{v'\in N_e(v)}d_i(\phi_{v'i}^{*}-\phi_{vi}^{*})\omega\widetilde{c}_{vi}=0\\
&\Leftrightarrow\sum_{C_i\to C_{i'}}\sum_{v\in\mca{V}}\omega a_{ii'}\left[\Psi_{i'}(\bm{\phi}_v^*)-\Psi_{i}(\bm{\phi}_v^*)\right]\Psi_{i}(\bm{c})+\sum_{i=1}^Nd_i\sum_{\substack{v,v'\in\mca{V}\\v'\in N_e(v)\\ v\in N_e(v')}}\left[(\phi_{v'i}^{*}-\phi_{vi}^{*})\omega\widetilde{c}_{vi}+(\phi_{vi}^{*}-\phi_{v'i}^{*})\omega\widetilde{c}_{v'i}\right]=0\\
&\Leftrightarrow\omega\sum_{v\in\mca{V}}\sum_{C_i\in\mca{C}}\Psi_{i}(\bm{\phi}_{v}^*)\left[\sum_{C_{i'}\to C_{i}}a_{i'i}\Psi_{i'}(\bm{c})-\sum_{C_{i}\to C_{i'}}a_{ii'}\Psi_{i}(\bm{c})\right]=0,~\forall\bm{\phi}\in\mbb{C}^{|\mca{V}|N}\\
&\Leftrightarrow \sum_{C_{i'}\to C_{i}}a_{i'i}\Psi_{i'}(\bm{c})-\sum_{C_{i}\to C_{i'}}a_{ii'}\Psi_{i}(\bm{c})=0,~\forall C_i\in\mca{C}\\
&\Leftrightarrow \bm{A}\cdot\Psi(\bm{c})=0.
\end{align*}
\end{widetext}

\section{Detailed calculations in Theorem 3}\label{app.theorem.3}
The master equation of $P(\widehat{\bm{n}},t)$ is derived as follows:
\begin{widetext}
\begin{equation}\label{eq:eq.of.P.hat}
\begin{aligned}
\partial_tP(\widehat{\bm{n}},t)&=\sum_{\bm{n}\in \mca{S}(\widehat{\bm{n}})}\partial_tP(\bm{n},t)\\
&=\sum_{\bm{n}\in\mca{S}(\widehat{\bm{n}})}\bigg(\sum_{v\in \mca{V}}\sum_{v'\in N_e(v)}\sum_{i=1}^{N}\bigg(d_{i}(n_{vi}+1)P(\bm{n}+\bm{1}_{vi}-\bm{1}_{v'i},t)-d_{i}n_{vi}P(\bm{n},t)\bigg)\\
&+\sum_{v\in \mca{V}}\sum_{j=1}^{K}\bigg(f_{vj}(\bm{n}-\widetilde{\bm{V}}_{vj},\omega)P(\bm{n}-\widetilde{\bm{V}}_{vj},t)-f_{vj}(\bm{n},\omega)P(\bm{n},t)\bigg)\bigg).
\end{aligned}
\end{equation}
As $\bm{n}\in\mca{S}(\widehat{\bm{n}})\rightarrow\bm{n}+\bm{1}_{vi}-\bm{1}_{v'i}=\bm{\widetilde{n}}\in\mca{S}(\widehat{\bm{n}})$, the first term of the right-hand side in Eq.~\eqref{eq:eq.of.P.hat} becomes
\begin{equation*}
\begin{aligned}
&\sum_{\bm{n}\in\mca{S}(\widehat{\bm{n}})}\sum_{v\in\mca{V}}\sum_{v'\in N_e(v)}\sum_{i=1}^{N}\bigg(d_{i}(n_{vi}+1)P(\bm{n}+\bm{1}_{vi}-\bm{1}_{v'i},t)-d_{i}n_{vi}P(\bm{n},t)\bigg)\\
&=\sum_{\bm{n}\in\mca{S}(\widehat{\bm{n}})}\sum_{v\in\mca{V}}\sum_{v'\in N_e(v)}\sum_{i=1}^{N}d_{i}(n_{vi}+1)P(\bm{n}+\bm{1}_{vi}-\bm{1}_{v'i},t)-\sum_{\bm{n}\in\mca{S}(\widehat{\bm{n}})}\sum_{v\in\mca{V}}\sum_{v'\in N_e(v)}\sum_{i=1}^{N}d_{i}n_{vi}P(\bm{n},t)\\
&=\sum_{\bm{\widetilde{n}}\in\mca{S}(\widehat{\bm{n}})}\sum_{v\in\mca{V}}\sum_{v'\in N_e(v)}\sum_{i=1}^{N}d_{i}\widetilde{n}_{vi}P(\bm{\widetilde{n}},t)-\sum_{\bm{n}\in\mca{S}(\widehat{\bm{n}})}\sum_{v\in\mca{V}}\sum_{v'\in N_e(v)}\sum_{i=1}^{N}d_{i}n_{vi}P(\bm{n},t)\\
&=0.
\end{aligned}
\end{equation*}

As the reaction network is linear, the propensity function $f_{vj}(\bm{n},\omega)$ takes one of two forms: $f_{vj}(\bm{n},\omega)=k_jn_{vi}$ or $f_{vj}(\bm{n},\omega)=k_j\omega$, where $k_j$ is the reaction rate and $i$ is the index of some species.
When $f_{vj}(\bm{n},\omega)=k_jn_{vi}$, the second term can be transformed as follows:
\begin{align*}
	&\sum_{\bm{n}\in\mca{S}(\widehat{\bm{n}})}\sum_{v\in \mca{V}}\bigg(f_{vj}(\bm{n}-\widetilde{\bm{V}}_{vj},\omega)P(\bm{n}-\widetilde{\bm{V}}_{vj},t)-f_{vj}(\bm{n},\omega)P(\bm{n},t)\bigg)\\
	&=k_j\sum_{\bm{n}\in\mca{S}(\widehat{\bm{n}})}\sum_{v\in \mca{V}}\bigg((n_{vi}-V_{ij})P(\bm{n}-\widetilde{\bm{V}}_{vj},t)-n_{vi}P(\bm{n},t)\bigg)\\
	&=k_j\bigg(\sum_{\bm{n}\in\mca{S}(\widehat{\bm{n}})}\sum_{v\in \mca{V}}(n_{vi}-V_{ij})P(\bm{n}-\widetilde{\bm{V}}_{vj},t)-\sum_{\bm{n}\in\mca{S}(\widehat{\bm{n}})}\sum_{v\in \mca{V}}n_{vi}P(\bm{n},t)\bigg)\\
	&=k_j\bigg(\sum_{\bm{n}\in\mca{S}(\widehat{\bm{n}})}\sum_{v\in \mca{V}}n_{vi}P(\bm{n}-\widetilde{\bm{V}}_{vj},t)-\sum_{\bm{n}\in\mca{S}(\widehat{\bm{n}})}\sum_{v\in \mca{V}}V_{ij}P(\bm{n}-\widetilde{\bm{V}}_{vj},t)-\sum_{\bm{n}\in\mca{S}(\widehat{\bm{n}})}\widehat{n}_{i}P(\bm{n},t)\bigg)\\
	&=k_j\bigg(\sum_{\bm{\widetilde{n}}\in \mca{S}(\widehat{\bm{n}}-\bm{V}_j)}\sum_{v\in \mca{V}}(\widetilde{n}_{vi}+V_{ij})P(\bm{\widetilde{n}},t)-V_{ij}\sum_{v\in \mca{V}}\sum_{\bm{n}\in\mca{S}(\widehat{\bm{n}})}P(\bm{n}-\widetilde{\bm{V}}_{vj},t)-\widehat{n}_{i}P(\widehat{\bm{n}},t)\bigg)\\
	&=k_j\bigg((\widehat{n}_i-V_{ij})P(\widehat{\bm{n}}-\bm{V}_j,t)+V_{ij}|\mca{V}|P(\widehat{\bm{n}}-\bm{V}_j,t)-V_{ij}|\mca{V}|P(\widehat{\bm{n}}-\bm{V}_j,t)-\widehat{n}_{i}P(\widehat{\bm{n}},t)\bigg)\\
	&=k_j(\widehat{n}_i-V_{ij})P(\widehat{\bm{n}}-\bm{V}_j,t)-k_j\widehat{n}_{i}P(\widehat{\bm{n}},t).
\end{align*}
When $f_{vj}(\bm{n},\omega)=k_j\omega$, we similarly have
\begin{align*}
	&\sum_{\bm{n}\in\mca{S}(\widehat{\bm{n}})}\sum_{v\in \mca{V}}\bigg(f_{vj}(\bm{n}-\widetilde{\bm{V}}_{vj},\omega)P(\bm{n}-\widetilde{\bm{V}}_{vj},t)-f_{vj}(\bm{n},\omega)P(\bm{n},t)\bigg)\\
	&=k_j\sum_{\bm{n}\in\mca{S}(\widehat{\bm{n}})}\sum_{v\in \mca{V}}\bigg(\omega P(\bm{n}-\widetilde{\bm{V}}_{vj},t)-\omega P(\bm{n},t)\bigg)\\
	&=k_j\bigg(\sum_{\bm{n}\in\mca{S}(\widehat{\bm{n}})}\sum_{v\in \mca{V}}\omega P(\bm{n}-\widetilde{\bm{V}}_{vj},t)-\sum_{\bm{n}\in\mca{S}(\widehat{\bm{n}})}\sum_{v\in \mca{V}}\omega P(\bm{n},t)\bigg)\\
	&=k_j\bigg(\sum_{v\in \mca{V}}\omega\sum_{\bm{n}\in \mca{S}(\widehat{\bm{n}})}P(\bm{n}-\widetilde{\bm{V}}_{vj},t)-\sum_{v\in \mca{V}}\omega\sum_{\bm{n}\in\mca{S}(\widehat{\bm{n}})}P(\bm{n},t)\bigg)\\
	&=k_j\Omega P(\widehat{\bm{n}}-\bm{V}_j,t)-k_j\Omega P(\widehat{\bm{n}},t).
\end{align*}
The master equation of $P(\widehat{\bm{n}},t)$ is then obtained as
\begin{equation}
\partial_t P(\widehat{\bm{n}},t)=\sum_{j=1}^{K}\left(f_j(\widehat{\bm{n}}-\bm{V}_j,\Omega)P(\widehat{\bm{n}}-\bm{V}_j,t)-f_j(\widehat{\bm{n}},\Omega)P(\widehat{\bm{n}},t)\right).
\end{equation}
\end{widetext}


\begin{thebibliography}{42}%
\makeatletter
\providecommand \@ifxundefined [1]{%
 \@ifx{#1\undefined}
}%
\providecommand \@ifnum [1]{%
 \ifnum #1\expandafter \@firstoftwo
 \else \expandafter \@secondoftwo
 \fi
}%
\providecommand \@ifx [1]{%
 \ifx #1\expandafter \@firstoftwo
 \else \expandafter \@secondoftwo
 \fi
}%
\providecommand \natexlab [1]{#1}%
\providecommand \enquote  [1]{``#1''}%
\providecommand \bibnamefont  [1]{#1}%
\providecommand \bibfnamefont [1]{#1}%
\providecommand \citenamefont [1]{#1}%
\providecommand \href@noop [0]{\@secondoftwo}%
\providecommand \href [0]{\begingroup \@sanitize@url \@href}%
\providecommand \@href[1]{\@@startlink{#1}\@@href}%
\providecommand \@@href[1]{\endgroup#1\@@endlink}%
\providecommand \@sanitize@url [0]{\catcode `\\12\catcode `\$12\catcode
  `\&12\catcode `\#12\catcode `\^12\catcode `\_12\catcode `\%12\relax}%
\providecommand \@@startlink[1]{}%
\providecommand \@@endlink[0]{}%
\providecommand \url  [0]{\begingroup\@sanitize@url \@url }%
\providecommand \@url [1]{\endgroup\@href {#1}{\urlprefix }}%
\providecommand \urlprefix  [0]{URL }%
\providecommand \Eprint [0]{\href }%
\providecommand \doibase [0]{http://dx.doi.org/}%
\providecommand \selectlanguage [0]{\@gobble}%
\providecommand \bibinfo  [0]{\@secondoftwo}%
\providecommand \bibfield  [0]{\@secondoftwo}%
\providecommand \translation [1]{[#1]}%
\providecommand \BibitemOpen [0]{}%
\providecommand \bibitemStop [0]{}%
\providecommand \bibitemNoStop [0]{.\EOS\space}%
\providecommand \EOS [0]{\spacefactor3000\relax}%
\providecommand \BibitemShut  [1]{\csname bibitem#1\endcsname}%
\let\auto@bib@innerbib\@empty
\bibitem [{\citenamefont {Blake}\ \emph {et~al.}(2003)\citenamefont {Blake},
  \citenamefont {K{\ae}rn}, \citenamefont {Cantor},\ and\ \citenamefont
  {Collins}}]{Blake.2003}%
  \BibitemOpen
  \bibfield  {author} {\bibinfo {author} {\bibfnamefont {W.~J.}\ \bibnamefont
  {Blake}}, \bibinfo {author} {\bibfnamefont {M.}~\bibnamefont {K{\ae}rn}},
  \bibinfo {author} {\bibfnamefont {C.~R.}\ \bibnamefont {Cantor}}, \ and\
  \bibinfo {author} {\bibfnamefont {J.~J.}\ \bibnamefont {Collins}},\ }\href
  {http://dx.doi.org/10.1038/nature01546} {\bibfield  {journal} {\bibinfo
  {journal} {Nature}\ }\textbf {\bibinfo {volume} {422}},\ \bibinfo {pages}
  {633} (\bibinfo {year} {2003})}\BibitemShut {NoStop}%
\bibitem [{\citenamefont {Lan}\ and\ \citenamefont {Papoian}(2007)}]{Lan.2007}%
  \BibitemOpen
  \bibfield  {author} {\bibinfo {author} {\bibfnamefont {Y.}~\bibnamefont
  {Lan}}\ and\ \bibinfo {author} {\bibfnamefont {G.~A.}\ \bibnamefont
  {Papoian}},\ }\href {\doibase 10.1103/PhysRevLett.98.228301} {\bibfield
  {journal} {\bibinfo  {journal} {Phys. Rev. Lett.}\ }\textbf {\bibinfo
  {volume} {98}},\ \bibinfo {pages} {228301} (\bibinfo {year}
  {2007})}\BibitemShut {NoStop}%
\bibitem [{\citenamefont {Tsimring}(2014)}]{Tsimring.2014}%
  \BibitemOpen
  \bibfield  {author} {\bibinfo {author} {\bibfnamefont {L.~S.}\ \bibnamefont
  {Tsimring}},\ }\href {http://stacks.iop.org/0034-4885/77/i=2/a=026601}
  {\bibfield  {journal} {\bibinfo  {journal} {Rep. Prog. Phys.}\ }\textbf
  {\bibinfo {volume} {77}},\ \bibinfo {pages} {026601} (\bibinfo {year}
  {2014})}\BibitemShut {NoStop}%
\bibitem [{\citenamefont {Elf}\ and\ \citenamefont
  {M{\aa}ns}(2003)}]{Johan.2003}%
  \BibitemOpen
  \bibfield  {author} {\bibinfo {author} {\bibfnamefont {J.}~\bibnamefont
  {Elf}}\ and\ \bibinfo {author} {\bibfnamefont {E.}~\bibnamefont {M{\aa}ns}},\
  }\href {\doibase 10.1101/gr.1196503} {\bibfield  {journal} {\bibinfo
  {journal} {Genome Res.}\ }\textbf {\bibinfo {volume} {13}},\ \bibinfo {pages}
  {2475} (\bibinfo {year} {2003})}\BibitemShut {NoStop}%
\bibitem [{\citenamefont {G{\'o}mez-Uribe}\ and\ \citenamefont
  {Verghese}(2007)}]{Carlos.2007}%
  \BibitemOpen
  \bibfield  {author} {\bibinfo {author} {\bibfnamefont {C.~A.}\ \bibnamefont
  {G{\'o}mez-Uribe}}\ and\ \bibinfo {author} {\bibfnamefont {G.~C.}\
  \bibnamefont {Verghese}},\ }\href {\doibase 10.1063/1.2408422} {\bibfield
  {journal} {\bibinfo  {journal} {J. Chem. Phys.}\ }\textbf {\bibinfo {volume}
  {126}},\ \bibinfo {pages} {024109} (\bibinfo {year} {2007})}\BibitemShut
  {NoStop}%
\bibitem [{\citenamefont {Tostevin}\ and\ \citenamefont {ten
  Wolde}(2010)}]{Tostevin.2010}%
  \BibitemOpen
  \bibfield  {author} {\bibinfo {author} {\bibfnamefont {F.}~\bibnamefont
  {Tostevin}}\ and\ \bibinfo {author} {\bibfnamefont {P.~R.}\ \bibnamefont {ten
  Wolde}},\ }\href {\doibase 10.1103/PhysRevE.81.061917} {\bibfield  {journal}
  {\bibinfo  {journal} {Phys. Rev. E}\ }\textbf {\bibinfo {volume} {81}},\
  \bibinfo {pages} {061917} (\bibinfo {year} {2010})}\BibitemShut {NoStop}%
\bibitem [{\citenamefont {Kim}\ \emph {et~al.}(2013)\citenamefont {Kim},
  \citenamefont {Qian},\ and\ \citenamefont {Sauro}}]{Kyung.2013}%
  \BibitemOpen
  \bibfield  {author} {\bibinfo {author} {\bibfnamefont {K.~H.}\ \bibnamefont
  {Kim}}, \bibinfo {author} {\bibfnamefont {H.}~\bibnamefont {Qian}}, \ and\
  \bibinfo {author} {\bibfnamefont {H.~M.}\ \bibnamefont {Sauro}},\ }\href
  {\doibase 10.1063/1.4822103} {\bibfield  {journal} {\bibinfo  {journal} {J.
  Chem. Phys.}\ }\textbf {\bibinfo {volume} {139}},\ \bibinfo {pages} {144108}
  (\bibinfo {year} {2013})}\BibitemShut {NoStop}%
\bibitem [{\citenamefont {Pilkiewicz}\ and\ \citenamefont
  {Mayo}(2016)}]{Pilkiewicz.2016}%
  \BibitemOpen
  \bibfield  {author} {\bibinfo {author} {\bibfnamefont {K.~R.}\ \bibnamefont
  {Pilkiewicz}}\ and\ \bibinfo {author} {\bibfnamefont {M.~L.}\ \bibnamefont
  {Mayo}},\ }\href {\doibase 10.1103/PhysRevE.94.032412} {\bibfield  {journal}
  {\bibinfo  {journal} {Phys. Rev. E}\ }\textbf {\bibinfo {volume} {94}},\
  \bibinfo {pages} {032412} (\bibinfo {year} {2016})}\BibitemShut {NoStop}%
\bibitem [{\citenamefont {Samanta}\ \emph {et~al.}(2017)\citenamefont
  {Samanta}, \citenamefont {Hinczewski},\ and\ \citenamefont
  {Thirumalai}}]{Samanta.2017}%
  \BibitemOpen
  \bibfield  {author} {\bibinfo {author} {\bibfnamefont {H.~S.}\ \bibnamefont
  {Samanta}}, \bibinfo {author} {\bibfnamefont {M.}~\bibnamefont {Hinczewski}},
  \ and\ \bibinfo {author} {\bibfnamefont {D.}~\bibnamefont {Thirumalai}},\
  }\href {\doibase 10.1103/PhysRevE.96.012406} {\bibfield  {journal} {\bibinfo
  {journal} {Phys. Rev. E}\ }\textbf {\bibinfo {volume} {96}},\ \bibinfo
  {pages} {012406} (\bibinfo {year} {2017})}\BibitemShut {NoStop}%
\bibitem [{\citenamefont {Ouldridge}\ \emph {et~al.}(2017)\citenamefont
  {Ouldridge}, \citenamefont {Govern},\ and\ \citenamefont {ten
  Wolde}}]{Ouldridge.2017}%
  \BibitemOpen
  \bibfield  {author} {\bibinfo {author} {\bibfnamefont {T.~E.}\ \bibnamefont
  {Ouldridge}}, \bibinfo {author} {\bibfnamefont {C.~C.}\ \bibnamefont
  {Govern}}, \ and\ \bibinfo {author} {\bibfnamefont {P.~R.}\ \bibnamefont {ten
  Wolde}},\ }\href {\doibase 10.1103/PhysRevX.7.021004} {\bibfield  {journal}
  {\bibinfo  {journal} {Phys. Rev. X}\ }\textbf {\bibinfo {volume} {7}},\
  \bibinfo {pages} {021004} (\bibinfo {year} {2017})}\BibitemShut {NoStop}%
\bibitem [{\citenamefont {Ellis}(2001)}]{Ellis.2001}%
  \BibitemOpen
  \bibfield  {author} {\bibinfo {author} {\bibfnamefont {R.}~\bibnamefont
  {Ellis}},\ }\href {\doibase 10.1016/S0959-440X(00)00172-X} {\bibfield
  {journal} {\bibinfo  {journal} {Curr. Opin. Struc. Biol.}\ }\textbf {\bibinfo
  {volume} {11}},\ \bibinfo {pages} {114 } (\bibinfo {year}
  {2001})}\BibitemShut {NoStop}%
\bibitem [{\citenamefont {Kholodenko}(2006)}]{Kholodenko.2006}%
  \BibitemOpen
  \bibfield  {author} {\bibinfo {author} {\bibfnamefont {B.~N.}\ \bibnamefont
  {Kholodenko}},\ }\href {http://dx.doi.org/10.1038/nrm1838} {\bibfield
  {journal} {\bibinfo  {journal} {Nat. Rev. Mol. Cell Biol.}\ }\textbf
  {\bibinfo {volume} {7}},\ \bibinfo {pages} {165} (\bibinfo {year}
  {2006})}\BibitemShut {NoStop}%
\bibitem [{\citenamefont {Cheong}\ \emph {et~al.}(2011)\citenamefont {Cheong},
  \citenamefont {Rhee}, \citenamefont {Wang}, \citenamefont {Nemenman},\ and\
  \citenamefont {Levchenko}}]{Cheong.2011}%
  \BibitemOpen
  \bibfield  {author} {\bibinfo {author} {\bibfnamefont {R.}~\bibnamefont
  {Cheong}}, \bibinfo {author} {\bibfnamefont {A.}~\bibnamefont {Rhee}},
  \bibinfo {author} {\bibfnamefont {C.~J.}\ \bibnamefont {Wang}}, \bibinfo
  {author} {\bibfnamefont {I.}~\bibnamefont {Nemenman}}, \ and\ \bibinfo
  {author} {\bibfnamefont {A.}~\bibnamefont {Levchenko}},\ }\href {\doibase
  10.1126/science.1204553} {\bibfield  {journal} {\bibinfo  {journal}
  {Science}\ }\textbf {\bibinfo {volume} {334}},\ \bibinfo {pages} {354}
  (\bibinfo {year} {2011})}\BibitemShut {NoStop}%
\bibitem [{\citenamefont {Becker}\ \emph {et~al.}(2015)\citenamefont {Becker},
  \citenamefont {Mugler},\ and\ \citenamefont {ten Wolde}}]{Becker.2015}%
  \BibitemOpen
  \bibfield  {author} {\bibinfo {author} {\bibfnamefont {N.~B.}\ \bibnamefont
  {Becker}}, \bibinfo {author} {\bibfnamefont {A.}~\bibnamefont {Mugler}}, \
  and\ \bibinfo {author} {\bibfnamefont {P.~R.}\ \bibnamefont {ten Wolde}},\
  }\href {\doibase 10.1103/PhysRevLett.115.258103} {\bibfield  {journal}
  {\bibinfo  {journal} {Phys. Rev. Lett.}\ }\textbf {\bibinfo {volume} {115}},\
  \bibinfo {pages} {258103} (\bibinfo {year} {2015})}\BibitemShut {NoStop}%
\bibitem [{\citenamefont {Hasegawa}(2018)}]{Hasegawa.2018}%
  \BibitemOpen
  \bibfield  {author} {\bibinfo {author} {\bibfnamefont {Y.}~\bibnamefont
  {Hasegawa}},\ }\href {\doibase 10.1103/PhysRevE.97.022401} {\bibfield
  {journal} {\bibinfo  {journal} {Phys. Rev. E}\ }\textbf {\bibinfo {volume}
  {97}},\ \bibinfo {pages} {022401} (\bibinfo {year} {2018})}\BibitemShut
  {NoStop}%
\bibitem [{\citenamefont {Elowitz}\ \emph {et~al.}(2002)\citenamefont
  {Elowitz}, \citenamefont {Levine}, \citenamefont {Siggia},\ and\
  \citenamefont {Swain}}]{Elowitz.2002}%
  \BibitemOpen
  \bibfield  {author} {\bibinfo {author} {\bibfnamefont {M.~B.}\ \bibnamefont
  {Elowitz}}, \bibinfo {author} {\bibfnamefont {A.~J.}\ \bibnamefont {Levine}},
  \bibinfo {author} {\bibfnamefont {E.~D.}\ \bibnamefont {Siggia}}, \ and\
  \bibinfo {author} {\bibfnamefont {P.~S.}\ \bibnamefont {Swain}},\ }\href
  {http://science.sciencemag.org/content/297/5584/1183} {\bibfield  {journal}
  {\bibinfo  {journal} {Science}\ }\textbf {\bibinfo {volume} {297}},\ \bibinfo
  {pages} {1183} (\bibinfo {year} {2002})}\BibitemShut {NoStop}%
\bibitem [{\citenamefont {Ozbudak}\ \emph {et~al.}(2002)\citenamefont
  {Ozbudak}, \citenamefont {Thattai}, \citenamefont {Kurtser}, \citenamefont
  {Grossman},\ and\ \citenamefont {van Oudenaarden}}]{Ozbudak.2002}%
  \BibitemOpen
  \bibfield  {author} {\bibinfo {author} {\bibfnamefont {E.~M.}\ \bibnamefont
  {Ozbudak}}, \bibinfo {author} {\bibfnamefont {M.}~\bibnamefont {Thattai}},
  \bibinfo {author} {\bibfnamefont {I.}~\bibnamefont {Kurtser}}, \bibinfo
  {author} {\bibfnamefont {A.~D.}\ \bibnamefont {Grossman}}, \ and\ \bibinfo
  {author} {\bibfnamefont {A.}~\bibnamefont {van Oudenaarden}},\ }\href
  {http://dx.doi.org/10.1038/ng869} {\bibfield  {journal} {\bibinfo  {journal}
  {Nat. Genet.}\ }\textbf {\bibinfo {volume} {31}},\ \bibinfo {pages} {69}
  (\bibinfo {year} {2002})}\BibitemShut {NoStop}%
\bibitem [{\citenamefont {van Zon}\ \emph {et~al.}(2006)\citenamefont {van
  Zon}, \citenamefont {Morelli}, \citenamefont {T{\u a}nase-Nicola},\ and\
  \citenamefont {ten Wolde}}]{Jeroen.2006}%
  \BibitemOpen
  \bibfield  {author} {\bibinfo {author} {\bibfnamefont {J.~S.}\ \bibnamefont
  {van Zon}}, \bibinfo {author} {\bibfnamefont {M.~J.}\ \bibnamefont
  {Morelli}}, \bibinfo {author} {\bibfnamefont {S.}~\bibnamefont {T{\u
  a}nase-Nicola}}, \ and\ \bibinfo {author} {\bibfnamefont {P.~R.}\
  \bibnamefont {ten Wolde}},\ }\href {\doibase 10.1529/biophysj.106.086157}
  {\bibfield  {journal} {\bibinfo  {journal} {Biophys. J.}\ }\textbf {\bibinfo
  {volume} {91}},\ \bibinfo {pages} {4350 } (\bibinfo {year}
  {2006})}\BibitemShut {NoStop}%
\bibitem [{\citenamefont {Smith}\ and\ \citenamefont
  {Grima}(2018)}]{Smith.2018}%
  \BibitemOpen
  \bibfield  {author} {\bibinfo {author} {\bibfnamefont {S.}~\bibnamefont
  {Smith}}\ and\ \bibinfo {author} {\bibfnamefont {R.}~\bibnamefont {Grima}},\
  }\href {https://doi.org/10.1038/s41467-017-02710-x} {\bibfield  {journal}
  {\bibinfo  {journal} {Nat. Commun.}\ }\textbf {\bibinfo {volume} {9}},\
  \bibinfo {pages} {345} (\bibinfo {year} {2018})}\BibitemShut {NoStop}%
\bibitem [{\citenamefont {Gardiner}(2009)}]{Gardiner.2009}%
  \BibitemOpen
  \bibfield  {author} {\bibinfo {author} {\bibfnamefont {C.}~\bibnamefont
  {Gardiner}},\ }\href@noop {} {\emph {\bibinfo {title} {Stochastic Methods}}}\
  (\bibinfo  {publisher} {Springer-Verlag Berlin Heidelberg},\ \bibinfo {year}
  {2009})\BibitemShut {NoStop}%
\bibitem [{\citenamefont {v.~Smoluchowski}(1917)}]{Von.1917}%
  \BibitemOpen
  \bibfield  {author} {\bibinfo {author} {\bibfnamefont {M.}~\bibnamefont
  {v.~Smoluchowski}},\ }\href {\doibase 10.1515/zpch-1918-9209} {\bibfield
  {journal} {\bibinfo  {journal} {Z. Phys. Chem.}\ }\textbf {\bibinfo {volume}
  {92}},\ \bibinfo {pages} {129} (\bibinfo {year} {1917})}\BibitemShut
  {NoStop}%
\bibitem [{\citenamefont {Isaacson}\ and\ \citenamefont
  {Isaacson}(2009)}]{Samuel.2009.PRE}%
  \BibitemOpen
  \bibfield  {author} {\bibinfo {author} {\bibfnamefont {S.~A.}\ \bibnamefont
  {Isaacson}}\ and\ \bibinfo {author} {\bibfnamefont {D.}~\bibnamefont
  {Isaacson}},\ }\href {\doibase 10.1103/PhysRevE.80.066106} {\bibfield
  {journal} {\bibinfo  {journal} {Phys. Rev. E}\ }\textbf {\bibinfo {volume}
  {80}},\ \bibinfo {pages} {066106} (\bibinfo {year} {2009})}\BibitemShut
  {NoStop}%
\bibitem [{\citenamefont {Howard}\ and\ \citenamefont
  {Rutenberg}(2003)}]{Howard.2003}%
  \BibitemOpen
  \bibfield  {author} {\bibinfo {author} {\bibfnamefont {M.}~\bibnamefont
  {Howard}}\ and\ \bibinfo {author} {\bibfnamefont {A.~D.}\ \bibnamefont
  {Rutenberg}},\ }\href {\doibase 10.1103/PhysRevLett.90.128102} {\bibfield
  {journal} {\bibinfo  {journal} {Phys. Rev. Lett.}\ }\textbf {\bibinfo
  {volume} {90}},\ \bibinfo {pages} {128102} (\bibinfo {year}
  {2003})}\BibitemShut {NoStop}%
\bibitem [{\citenamefont {Fange}\ and\ \citenamefont {Elf}(2006)}]{Fange.2006}%
  \BibitemOpen
  \bibfield  {author} {\bibinfo {author} {\bibfnamefont {D.}~\bibnamefont
  {Fange}}\ and\ \bibinfo {author} {\bibfnamefont {J.}~\bibnamefont {Elf}},\
  }\href {\doibase 10.1371/journal.pcbi.0020080} {\bibfield  {journal}
  {\bibinfo  {journal} {PLoS Comput. Biol.}\ }\textbf {\bibinfo {volume} {2}},\
  \bibinfo {pages} {1} (\bibinfo {year} {2006})}\BibitemShut {NoStop}%
\bibitem [{\citenamefont {Lawson}\ \emph {et~al.}(2013)\citenamefont {Lawson},
  \citenamefont {Drawert}, \citenamefont {Khammash}, \citenamefont {Petzold},\
  and\ \citenamefont {Yi}}]{Lawson.2013}%
  \BibitemOpen
  \bibfield  {author} {\bibinfo {author} {\bibfnamefont {M.~J.}\ \bibnamefont
  {Lawson}}, \bibinfo {author} {\bibfnamefont {B.}~\bibnamefont {Drawert}},
  \bibinfo {author} {\bibfnamefont {M.}~\bibnamefont {Khammash}}, \bibinfo
  {author} {\bibfnamefont {L.}~\bibnamefont {Petzold}}, \ and\ \bibinfo
  {author} {\bibfnamefont {T.-M.}\ \bibnamefont {Yi}},\ }\href {\doibase
  10.1371/journal.pcbi.1003139} {\bibfield  {journal} {\bibinfo  {journal}
  {PLoS Comput. Biol.}\ }\textbf {\bibinfo {volume} {9}},\ \bibinfo {pages} {1}
  (\bibinfo {year} {2013})}\BibitemShut {NoStop}%
\bibitem [{\citenamefont {Benitez}\ \emph {et~al.}(2016)\citenamefont
  {Benitez}, \citenamefont {Duclut}, \citenamefont {Chat\'e}, \citenamefont
  {Delamotte}, \citenamefont {Dornic},\ and\ \citenamefont
  {Mu\~noz}}]{Benitez.2016}%
  \BibitemOpen
  \bibfield  {author} {\bibinfo {author} {\bibfnamefont {F.}~\bibnamefont
  {Benitez}}, \bibinfo {author} {\bibfnamefont {C.}~\bibnamefont {Duclut}},
  \bibinfo {author} {\bibfnamefont {H.}~\bibnamefont {Chat\'e}}, \bibinfo
  {author} {\bibfnamefont {B.}~\bibnamefont {Delamotte}}, \bibinfo {author}
  {\bibfnamefont {I.}~\bibnamefont {Dornic}}, \ and\ \bibinfo {author}
  {\bibfnamefont {M.~A.}\ \bibnamefont {Mu\~noz}},\ }\href {\doibase
  10.1103/PhysRevLett.117.100601} {\bibfield  {journal} {\bibinfo  {journal}
  {Phys. Rev. Lett.}\ }\textbf {\bibinfo {volume} {117}},\ \bibinfo {pages}
  {100601} (\bibinfo {year} {2016})}\BibitemShut {NoStop}%
\bibitem [{\citenamefont {Elf}\ and\ \citenamefont
  {Ehrenberg}(2004)}]{Elf.2004}%
  \BibitemOpen
  \bibfield  {author} {\bibinfo {author} {\bibfnamefont {J.}~\bibnamefont
  {Elf}}\ and\ \bibinfo {author} {\bibfnamefont {M.}~\bibnamefont
  {Ehrenberg}},\ }\href
  {http://digital-library.theiet.org/content/journals/10.1049/sb_20045021}
  {\bibfield  {journal} {\bibinfo  {journal} {IET Syst. Biol.}\ }\textbf
  {\bibinfo {volume} {1}},\ \bibinfo {pages} {230} (\bibinfo {year}
  {2004})}\BibitemShut {NoStop}%
\bibitem [{\citenamefont {Dobrzy{\'n}ski}\ \emph {et~al.}(2007)\citenamefont
  {Dobrzy{\'n}ski}, \citenamefont {Rodr{\'i}guez}, \citenamefont {Kaandorp},\
  and\ \citenamefont {Blom}}]{Dobrzynski.2007}%
  \BibitemOpen
  \bibfield  {author} {\bibinfo {author} {\bibfnamefont {M.}~\bibnamefont
  {Dobrzy{\'n}ski}}, \bibinfo {author} {\bibfnamefont {J.~V.}\ \bibnamefont
  {Rodr{\'i}guez}}, \bibinfo {author} {\bibfnamefont {J.~A.}\ \bibnamefont
  {Kaandorp}}, \ and\ \bibinfo {author} {\bibfnamefont {J.~G.}\ \bibnamefont
  {Blom}},\ }\href {\doibase 10.1093/bioinformatics/btm278} {\bibfield
  {journal} {\bibinfo  {journal} {Bioinformatics}\ }\textbf {\bibinfo {volume}
  {23}},\ \bibinfo {pages} {1969} (\bibinfo {year} {2007})}\BibitemShut
  {NoStop}%
\bibitem [{\citenamefont {Isaacson}(2009)}]{Samuel.2009.SIAM}%
  \BibitemOpen
  \bibfield  {author} {\bibinfo {author} {\bibfnamefont {S.~A.}\ \bibnamefont
  {Isaacson}},\ }\href {\doibase 10.1137/070705039} {\bibfield  {journal}
  {\bibinfo  {journal} {SIAM J. Appl. Math.}\ }\textbf {\bibinfo {volume}
  {70}},\ \bibinfo {pages} {77} (\bibinfo {year} {2009})}\BibitemShut {NoStop}%
\bibitem [{\citenamefont {Erban}\ and\ \citenamefont
  {Chapman}(2009)}]{Radek.2009}%
  \BibitemOpen
  \bibfield  {author} {\bibinfo {author} {\bibfnamefont {R.}~\bibnamefont
  {Erban}}\ and\ \bibinfo {author} {\bibfnamefont {S.~J.}\ \bibnamefont
  {Chapman}},\ }\href {http://stacks.iop.org/1478-3975/6/i=4/a=046001}
  {\bibfield  {journal} {\bibinfo  {journal} {Phys. Biol.}\ }\textbf {\bibinfo
  {volume} {6}},\ \bibinfo {pages} {046001} (\bibinfo {year}
  {2009})}\BibitemShut {NoStop}%
\bibitem [{\citenamefont {Smith}\ and\ \citenamefont
  {Grima}(2016)}]{Stephen.2016}%
  \BibitemOpen
  \bibfield  {author} {\bibinfo {author} {\bibfnamefont {S.}~\bibnamefont
  {Smith}}\ and\ \bibinfo {author} {\bibfnamefont {R.}~\bibnamefont {Grima}},\
  }\href {\doibase 10.1103/PhysRevE.93.052135} {\bibfield  {journal} {\bibinfo
  {journal} {Phys. Rev. E}\ }\textbf {\bibinfo {volume} {93}},\ \bibinfo
  {pages} {052135} (\bibinfo {year} {2016})}\BibitemShut {NoStop}%
\bibitem [{\citenamefont {Feinberg}(1995)}]{Feinberg.1995}%
  \BibitemOpen
  \bibfield  {author} {\bibinfo {author} {\bibfnamefont {M.}~\bibnamefont
  {Feinberg}},\ }\href {\doibase 10.1007/BF00375614} {\bibfield  {journal}
  {\bibinfo  {journal} {Arch. Ration. Mech. Anal.}\ }\textbf {\bibinfo {volume}
  {132}},\ \bibinfo {pages} {311} (\bibinfo {year} {1995})}\BibitemShut
  {NoStop}%
\bibitem [{\citenamefont {Anderson}\ \emph {et~al.}(2010)\citenamefont
  {Anderson}, \citenamefont {Craciun},\ and\ \citenamefont
  {Kurtz}}]{Anderson.2010}%
  \BibitemOpen
  \bibfield  {author} {\bibinfo {author} {\bibfnamefont {D.~F.}\ \bibnamefont
  {Anderson}}, \bibinfo {author} {\bibfnamefont {G.}~\bibnamefont {Craciun}}, \
  and\ \bibinfo {author} {\bibfnamefont {T.~G.}\ \bibnamefont {Kurtz}},\ }\href
  {\doibase 10.1007/s11538-010-9517-4} {\bibfield  {journal} {\bibinfo
  {journal} {Bull. Math. Biol.}\ }\textbf {\bibinfo {volume} {72}},\ \bibinfo
  {pages} {1947} (\bibinfo {year} {2010})}\BibitemShut {NoStop}%
\bibitem [{\citenamefont {Szederk{\'e}nyi}\ and\ \citenamefont
  {Hangos}(2011)}]{Szederkenyi.2011}%
  \BibitemOpen
  \bibfield  {author} {\bibinfo {author} {\bibfnamefont {G.}~\bibnamefont
  {Szederk{\'e}nyi}}\ and\ \bibinfo {author} {\bibfnamefont {K.~M.}\
  \bibnamefont {Hangos}},\ }\href {\doibase 10.1007/s10910-011-9804-9}
  {\bibfield  {journal} {\bibinfo  {journal} {J. Math. Chem.}\ }\textbf
  {\bibinfo {volume} {49}},\ \bibinfo {pages} {1163} (\bibinfo {year}
  {2011})}\BibitemShut {NoStop}%
\bibitem [{\citenamefont {Peliti}(1985)}]{Peliti.1985}%
  \BibitemOpen
  \bibfield  {author} {\bibinfo {author} {\bibfnamefont {L.}~\bibnamefont
  {Peliti}},\ }\href {\doibase 10.1051/jphys:019850046090146900} {\bibfield
  {journal} {\bibinfo  {journal} {J. Phys. France}\ }\textbf {\bibinfo {volume}
  {46}},\ \bibinfo {pages} {1469} (\bibinfo {year} {1985})}\BibitemShut
  {NoStop}%
\bibitem [{\citenamefont {Lubensky}(2010)}]{Lubensky.2010}%
  \BibitemOpen
  \bibfield  {author} {\bibinfo {author} {\bibfnamefont {D.~K.}\ \bibnamefont
  {Lubensky}},\ }\href {\doibase 10.1103/PhysRevE.81.060102} {\bibfield
  {journal} {\bibinfo  {journal} {Phys. Rev. E}\ }\textbf {\bibinfo {volume}
  {81}},\ \bibinfo {pages} {060102} (\bibinfo {year} {2010})}\BibitemShut
  {NoStop}%
\bibitem [{\citenamefont {Doi}(1976)}]{Doi.1976.1}%
  \BibitemOpen
  \bibfield  {author} {\bibinfo {author} {\bibfnamefont {M.}~\bibnamefont
  {Doi}},\ }\href {http://stacks.iop.org/0305-4470/9/i=9/a=008} {\bibfield
  {journal} {\bibinfo  {journal} {J. Phys. A: Math. Gen.}\ }\textbf {\bibinfo
  {volume} {9}},\ \bibinfo {pages} {1465} (\bibinfo {year} {1976})}\BibitemShut
  {NoStop}%
\bibitem [{\citenamefont {T{\u a}nase-Nicola}\ \emph
  {et~al.}(2006)\citenamefont {T{\u a}nase-Nicola}, \citenamefont {Warren},\
  and\ \citenamefont {ten Wolde}}]{Nicola.2006}%
  \BibitemOpen
  \bibfield  {author} {\bibinfo {author} {\bibfnamefont {S.}~\bibnamefont {T{\u
  a}nase-Nicola}}, \bibinfo {author} {\bibfnamefont {P.~B.}\ \bibnamefont
  {Warren}}, \ and\ \bibinfo {author} {\bibfnamefont {P.~R.}\ \bibnamefont {ten
  Wolde}},\ }\href {\doibase 10.1103/PhysRevLett.97.068102} {\bibfield
  {journal} {\bibinfo  {journal} {Phys. Rev. Lett.}\ }\textbf {\bibinfo
  {volume} {97}},\ \bibinfo {pages} {068102} (\bibinfo {year}
  {2006})}\BibitemShut {NoStop}%
\bibitem [{\citenamefont {Cappelletti}\ and\ \citenamefont
  {Wiuf}(2016)}]{Daniele.2016}%
  \BibitemOpen
  \bibfield  {author} {\bibinfo {author} {\bibfnamefont {D.}~\bibnamefont
  {Cappelletti}}\ and\ \bibinfo {author} {\bibfnamefont {C.}~\bibnamefont
  {Wiuf}},\ }\href {\doibase 10.1137/15M1029916} {\bibfield  {journal}
  {\bibinfo  {journal} {SIAM J. Appl. Math.}\ }\textbf {\bibinfo {volume}
  {76}},\ \bibinfo {pages} {411} (\bibinfo {year} {2016})}\BibitemShut
  {NoStop}%
\bibitem [{\citenamefont {Friedmann}\ \emph {et~al.}(2013)\citenamefont
  {Friedmann}, \citenamefont {Neumann},\ and\ \citenamefont
  {Rannacher}}]{Friedmann.2013}%
  \BibitemOpen
  \bibfield  {author} {\bibinfo {author} {\bibfnamefont {E.}~\bibnamefont
  {Friedmann}}, \bibinfo {author} {\bibfnamefont {R.}~\bibnamefont {Neumann}},
  \ and\ \bibinfo {author} {\bibfnamefont {R.}~\bibnamefont {Rannacher}},\
  }\href {https://projecteuclid.org:443/euclid.cma/1376053392} {\bibfield
  {journal} {\bibinfo  {journal} {Commun. Math. Anal.}\ }\textbf {\bibinfo
  {volume} {15}},\ \bibinfo {pages} {76} (\bibinfo {year} {2013})}\BibitemShut
  {NoStop}%
\bibitem [{\citenamefont {Mayer}\ \emph {et~al.}(2005)\citenamefont {Mayer},
  \citenamefont {Emery}, \citenamefont {Berdnik}, \citenamefont {Wirtz-Peitz},\
  and\ \citenamefont {Knoblich}}]{Mayer.2005}%
  \BibitemOpen
  \bibfield  {author} {\bibinfo {author} {\bibfnamefont {B.}~\bibnamefont
  {Mayer}}, \bibinfo {author} {\bibfnamefont {G.}~\bibnamefont {Emery}},
  \bibinfo {author} {\bibfnamefont {D.}~\bibnamefont {Berdnik}}, \bibinfo
  {author} {\bibfnamefont {F.}~\bibnamefont {Wirtz-Peitz}}, \ and\ \bibinfo
  {author} {\bibfnamefont {J.~A.}\ \bibnamefont {Knoblich}},\ }\href {\doibase
  10.1016/j.cub.2005.08.067} {\bibfield  {journal} {\bibinfo  {journal} {Curr.
  Biol.}\ }\textbf {\bibinfo {volume} {15}},\ \bibinfo {pages} {1847 }
  (\bibinfo {year} {2005})}\BibitemShut {NoStop}%
\bibitem [{\citenamefont {Levine}\ and\ \citenamefont
  {Hwa}(2007)}]{Levine.2007}%
  \BibitemOpen
  \bibfield  {author} {\bibinfo {author} {\bibfnamefont {E.}~\bibnamefont
  {Levine}}\ and\ \bibinfo {author} {\bibfnamefont {T.}~\bibnamefont {Hwa}},\
  }\href {\doibase 10.1073/pnas.0610987104} {\bibfield  {journal} {\bibinfo
  {journal} {Proc. Natl. Acad. Sci. U.S.A.}\ }\textbf {\bibinfo {volume}
  {104}},\ \bibinfo {pages} {9224} (\bibinfo {year} {2007})}\BibitemShut
  {NoStop}%
\end{thebibliography}
\end{document}